\documentclass[
  aps,
  prx,
  reprint,
  longbibliography,
  superscriptaddress
]{revtex4-2}
\usepackage{orcidlink}
\usepackage{template}

\begin{document}
\title{Trading Imaginary Time for Randomness in Ground State Preparation}

\author{Alvan Arulandu\,\orcidlink{0009-0000-0071-348X}}
\email{aarulandu@college.harvard.edu}

\affiliation{Department of Mathematics, Harvard University, Cambridge, MA 02138, USA}
\affiliation{Department of Computer Science, Harvard University, Cambridge, MA 02138, USA}
\affiliation{Harvard Quantum Initiative, Harvard University, Cambridge, MA 02138, USA}

\author{John M. Martyn\,\orcidlink{0000-0002-4065-6974}} 
\affiliation{Physical and Computational Sciences, Pacific Northwest National Laboratory, Richland, WA 99354, USA}
\affiliation{Harvard Quantum Initiative, Harvard University, Cambridge, MA 02138, USA}
\affiliation{Department of Physics, Harvard University, Cambridge, MA 02138, USA}

\author{Isaac L. Chuang\,\orcidlink{0000-0001-7296-523X}}
\affiliation{Department of Physics, Massachusetts Institute of Technology, Cambridge, MA 02139, USA}
\affiliation{Department of Electrical Engineering and Computer Science,
Massachusetts Institute of Technology, Cambridge, MA 02139, USA}

\begin{abstract}
    Imaginary-time evolution (ITE) is a foundational method for ground state preparation on quantum computers. However, because ITE is non-unitary, existing implementations incur a sample complexity and/or classical cost that scales exponentially with the target imaginary time $\beta$. 
    Moreover, the state itself converges slower than the energy, making accurate estimation of arbitrary ground state observables even more expensive. 
    In this work, we improve upon standard ITE by introducing \emph{twirled imaginary-time evolution} (TITE), which pairs ITE with real-time evolution applied for a random duration drawn from a carefully designed distribution. 
    We prove that this randomization quadratically suppresses the trace distance to the ground state, and thus also the error of arbitrary observables, which allows roughly half of the imaginary time to be replaced with real-time evolution ($\beta \mapsto \beta/2$) while maintaining the same level of accuracy. Because real-time evolution is unitary and does not incur an overhead in sample complexity or classical computation, this affords a quadratic reduction in the cost of any black-box ITE implementation, including Trotterization and quantum imaginary-time evolution. We demonstrate the efficiency of our algorithm in noisy circuit-level simulations of a non-integrable Ising chain, showing substantial improvements over standard ITE. 

\end{abstract}

\maketitle

\section{Introduction}

One of the most promising applications of quantum computers is the simulation of complex quantum systems. This includes tasks such as modeling the time dynamics of spin chains~\cite{andersen2025thermalization, kim2023evidence}, simulating quantum chemistry~\cite{Cao_2019}, and exploring quantum field theories~\cite{preskill2018simulating, Jordan_2014}. A particularly important task among these is estimating the ground state of a many-body Hamiltonian, which constitutes a central problem in condensed matter physics~\cite{Lin2020Near, Dong_2022, Peng_2005, cerezo2022variational, Yu_2023}, electronic structure~\cite{peruzzo2014variational, OMalley_2016, Hempel_2018, Romero_2019}, and even certain classes of optimization problems~\cite{farhi2014quantum, jordan2025optimization}.

Given this rich interest, several quantum algorithms have been developed to prepare ground states and estimate their properties. 
Among these, a particularly notable approach is \textit{imaginary-time evolution} (ITE). Given a Hamiltonian $\cH$, this method applies the non-unitary operator $e^{-\beta \mathcal{H}}$ to an initial state for a target imaginary time $\beta$. Upon renormalizing the state, this procedure suppresses excited-state contributions and projects out the ground state as $\beta$ increases.

Various strategies have been designed to realize imaginary-time evolution as a quantum algorithm. Prominent instances include using post-selection on Trotter approximations of $e^{-\beta \mathcal{H}}$~\cite{xie2022probabilistic, Kosugi_2022, TUrro_2022, Liu_2021, leadbeater2024non}, reformulating imaginary-time dynamics as a state-dependent real-time dynamics~\cite{Motta_2019}, and invoking quantum signal processing to approximate the exponential of the Hamiltonian~\cite{Gily_n_2019, Silva_2023}. These developments have made ITE a candidate algorithm for near-term and early fault-tolerant quantum computers. 
This is in contrast to quantum phase estimation which typically requires many ancilla, controlled time evolution, and deep circuits, even when the initial state has high overlap with the ground state. Further, unlike adiabatic algorithms which require spectral gap bounds along the entire adiabatic path, the sample complexity of ITE can be bounded in terms of only the target accuracy and the spectral gap of the target Hamiltonian.

While ITE approaches have been successfully demonstrated on small systems~\cite{Sun_2021, Yeter_Aydeniz_2021}, a key challenge in their deployment is their cost: applying the non-unitary operator $e^{-\beta \mathcal{H}}$ decreases the norm of the underlying wave function, which must be corrected through repeated circuit executions or amplitude amplification. Consequently, imaginary-time evolution algorithms often acquire a cost (e.g., sample complexity, amplitude amplification steps, or classical cost) that scales exponentially with $\beta$, prohibiting extension to large values of $\beta$ required for accurate ground state estimation. For example, methods based on Trotterization and quantum signal processing~\cite{xie2022probabilistic, Kosugi_2022, TUrro_2022, Liu_2021, leadbeater2024non, Gily_n_2019, Silva_2023} require post-selection, leading to a success probability that decays exponentially with $\beta$. In a similar vein, the ``quantum imaginary-time evolution'' algorithm developed in Ref.~\cite{Motta_2019} requires solving classical optimization problems on a Hilbert space whose dimension grows exponentially with $\beta$.

In this work, we present a general method to alleviate the cost of ITE algorithms by pairing imaginary-time evolution with real-time evolution. Specifically, we develop a method that replaces roughly half of the imaginary time $\beta$ with an equivalent amount of randomly-sampled real-time evolution. Because real-time evolution is unitary and preserves the norm of the wave function, it does not increase the classical cost or sample complexity, and consequently our method reduces the total cost to pinpoint the ground state. Specifically, because the sample complexity/classical cost generically scales exponentially in $\beta$, halving the amount of required imaginary time corresponds to an asymptotic quadratic reduction in cost.

Our method, which we term \textit{twirled imaginary-time evolution} (TITE), constructs a quantum channel composed of imaginary-time evolution for a time roughly $\beta/2$, followed by real-time evolution for a random time $t$ drawn from a carefully designed distribution, such that the average real-time dynamics reproduces the remaining $\beta/2$ of imaginary time. We refer to this randomized real-time evolution process as \textit{temporal twirling}.
We prove our results by using the concept of \textit{randomized compiling}, which provides a general framework for suppressing errors by mixing over unitaries. While this is traditionally used to mitigate gate errors in quantum circuits, here we use this tool to suppress the trace distance to the ground state in ITE. In addition, because our method is formulated at the level of a quantum channel, it is compatible with a broad range of existing ITE implementations, such as Trotterization, quantum signal processing, and quantum imaginary-time evolution. This flexibility allows our method to be readily integrated into existing and future algorithms for ground state estimation.

In the broader landscape of quantum information, our algorithm sits between error mitigation~\cite{Cai_2023} and ground state distillation methods (like virtual cooling and distillation~\cite{Cotler_2019, Huggins_2021}). Error mitigation aims to suppress physical noise in quantum states, often by leveraging tools like randomized compiling~\cite{Wallman_2016}. On the other hand, ground state distillation methods enhance purity by extracting cleaner ground states from noisy approximations, where this noise may stem from physical sources (e.g., gate error) or algorithmic sources (e.g., a low-temperature thermal state not being exactly the ground state). Our approach connects these two methods by randomly compiling over real-time evolution to mitigate the algorithmic error suffered by a state evolved for a finite imaginary time. Notably, this is accomplished without significantly increasing the cost relative to standard imaginary-time evolution.

We outline this work as follows. We begin by covering preliminaries and formulating our goal in Sec.~\ref{sec:prelim}, and then present our algorithm in Sec.~\ref{sec:algorithm}, where we also analyze its performance and connections to related topics. We then deploy our method in Sec.~\ref{sec:sim} to prepare the ground state of a non-integrable quantum spin chain, demonstrating its advantage over ordinary ITE in both noisy circuit-level and ideal simulations. Finally, we conclude and discuss the outlook of this work in Sec.~\ref{sec:conclusion}.

\section{Background}\label{sec:prelim} 

In this section, we provide background information on the two primary components of this paper, namely imaginary-time evolution and randomized compiling. Sec.~\ref{sec:background-ite} reviews imaginary-time evolution and its associated $\text{exp}(\cO(\beta))$ cost, and Sec.~\ref{sec:background-rc} reviews randomized compiling and the error suppression that it affords.

\subsection{Imaginary-Time Evolution}\label{sec:background-ite}

As its name suggests, imaginary-time evolution is the evolution of a quantum state upon replacing the real time $t$ with imaginary time $\beta = it$ in the Schr\"odinger equation. For a time-independent Hamiltonian $\cH$, ITE maps the time evolution operator $e^{-i\cH t}$ to the non-unitary ITE operator $e^{-\beta \mathcal{H}}$. This acts on an initial state $|\psi(0)\rangle$ as:
\begin{equation}
    \ket{\psi(\beta)} = \frac{e^{-\beta \cH}\ket{\psi(0)}}{\| e^{-\beta \cH}\ket{\psi(0)} \|_2} . 
\end{equation}

For sufficiently large $\beta$, the state $|\psi(\beta)\rangle$ converges to the ground state of $\mathcal{H}$, as excited-state contributions are exponentially suppressed by $\beta$. More precisely, let $\mathcal{H}|\lambda_k\rangle = \lambda_k |\lambda_k\rangle$ denote the eigenbasis of $\mathcal{H}$, and let the spectral gap be $\lambda_1 - \lambda_0 =: \Delta > 0$. Then, provided the initial state has non-zero overlap with the ground state, $|\langle \lambda_0 | \psi(0)\rangle| > 0$, the result of ITE is the ground state:
\begin{equation}
    \lim_{\beta \rightarrow \infty} \ket{\psi(\beta)} = |\lambda_0\rangle . 
\end{equation}
The rate of convergence depends on both the spectral gap and the target accuracy: to approximate the ground state to trace distance at most $\epsilon$ (equivalently, fidelity at least $1-O(\epsilon^2)$), it suffices to evolve for imaginary time $\beta = \cO( \log(1/\epsilon)/\Delta)$.

Given this guaranteed convergence to the ground state, ITE has formed the basis of many ground state preparation algorithms. Classically, it underlies powerful methods such as quantum Monte Carlo~\cite{Ceperley_1995, Foulkes_2001, Zhang_2003}, minimally entangled typical thermal states~\cite{White_2009, Stoudenmire_2010}, and time-evolving block decimation~\cite{Paeckel_2019, Kormos_2016}. More recently, ITE has inspired a variety of quantum algorithms for ground state preparation on near-term quantum computers~\cite{Motta_2019, xie2022probabilistic, Kosugi_2022, TUrro_2022, Liu_2021, leadbeater2024non, Silva_2023}. These algorithms aim to realize the action of the ITE operator through a quantum algorithm, thereby projecting an initial state onto the ground state. Here, the central challenge lies in simulating this non-unitary evolution using the inherently unitary dynamics provided by a quantum computer.

This challenge has prompted the development of several approaches for realizing ITE. Here we will consider three representative methods that implement the ITE operator directly (in contrast to variational quantum algorithms, which remain heuristic and highly sensitive to the choice of parameterization~\cite{McArdle_2019, Yuan2019theory}). Despite differences in their implementation and scope, these direct methods share a common yet costly limitation: their classical cost/sample complexity scales exponentially with the imaginary time, as $e^{\cO(\beta)}$.

First, a common approach is to approximate the ITE operator using Trotterization. Representing the Hamiltonian as a sum of local terms $\cH = \sum_{i=1}^L w_i h_i$, the simplest such implementation is the first-order Trotter formula over $r$ time steps: 
\begin{equation}
    e^{-\beta \cH}=\left(\prod_{i=1}^{L}e^{-w_i h_i \Delta \beta }\right)^{r}+\cO(\beta^2/r) ,  
\end{equation}
where $\Delta \beta = \beta / r$ is the imaginary-time step size. This decomposes the ITE operator into a sequence of local non-unitary operations $e^{-h_i \Delta \beta }$. Using this decomposition, Refs.~\cite{xie2022probabilistic, Kosugi_2022, TUrro_2022, Liu_2021, leadbeater2024non, ray_quasiprobabilistic_2025} designed algorithms that realize the local non-unitary operators probabilistically, either by post-selecting on mid-circuit measurements or using a quasiprobability decomposition. As a result, each time step only succeeds with sub-unity probability $p<1$, implying that the total success probability after $r = \beta / \Delta\beta$ time steps decays exponentially as $e^{-\cO(\beta)}$ (for fixed $\Delta \beta$). To ensure successful ground state preparation, one must compensate for this low success probability by performing $e^{\cO(\beta)}$ repetitions, thus costing exponentially many copies of the initial state. This remains true even if amplitude amplification is used to quadratically boost success probabilities.

Another similar example is the quantum imaginary-time evolution (QITE) algorithm of Ref.~\cite{Motta_2019}, which also uses the above Trotter decomposition. Instead of implementing each local non-unitary operation $e^{-w_i h_i \Delta \beta}$ probabilistically, the authors realize the action of each operation as a state-dependent unitary transformation that acts on a domain of $D$ qubits around the support of the term $h_i$. Crucially, this unitary depends on the current quantum state, and is determined by performing tomography and solving a classical problem on $D$ qubits. As noted in Ref.~\cite{Motta_2019}, the domain size $D$ grows with each step of the ITE algorithm, and thus scales linearly in $\beta$. This implies that both the tomography and classical optimization steps incur a sample cost and classical cost, respectively, of $e^{\cO(D)} = e^{\cO(\beta)}$, thus also incurring a complexity that grows exponentially with $\beta$.

Aside from Trotterization-based methods, recent works have also proposed realizing ITE with quantum signal processing (QSP)~\cite{Low_2016, Low_2017_Optimal}. QSP implements polynomial transformations of an operator, such that ITE can be achieved by designing a polynomial approximation to the decaying exponential: $P(\cH) \approx e^{-\beta \mathcal{H}}$, either directly~\cite{Gily_n_2019} or as a fragmented product of polynomials~\cite{Silva_2023}. In these constructions, the polynomial transformations are embedded in a sub-block of a larger unitary matrix and accessed by measurement and post-selection. Because the target polynomial $P(\cH) \approx e^{-\beta \mathcal{H}}$ decays exponentially in magnitude with $\beta$, so too does the probability of successfully accessing it. Consequently, like the probabilistic realizations of ITE, this low success probability must be addressed by performing $e^{\cO(\beta)}$ repetitions, even if amplitude amplification is used to boost success probability. 
Evidently, this $e^{\cO(\beta)}$ cost is a generic feature of leading implementations of ITE. 

\subsection{Randomized Compiling}\label{sec:background-rc}

Recent research in quantum information has highlighted the utility of randomization~\cite{arute2019quantum, zlokapa2023boundaries, schuster2025random, Ma_2025}. A notable such instance is the development of \textit{randomized compiling}. Rather than execute the same circuit at every call of an algorithm, randomized compiling proposes to execute a random circuit drawn from a distribution. Equivalently, this replaces a unitary operation with a quantum channel that is a probabilistic mixture of unitaries. In doing so, randomized compiling is able to suppress errors and achieve better performance than deterministic methods.

At the heart of randomized compiling is the \emph{mixing lemma}~\cite{Campbell_2017, hastings2016turning}, which quantifies the achievable error suppression:
\begin{lemma}[Campbell-Hastings Mixing Lemma~\cite{Campbell_2017, hastings2016turning}]\label{lemma:mixing}
    Let $V$ be a target unitary operator, and $\mathcal{V}(\rho ) = V \rho V^\dag$ the corresponding channel. Suppose there exist $m$ unitaries $\{U_j\}_{j=1}^m$ and an associated probability distribution $\{p_j\}$ that approximate $V$ as 
    \begin{equation}
    \begin{aligned}
        &\| U_j - V \|_{\rm op} \leq a \text{ for all } j, \\
        & \Big\| \sum_{j=1}^m p_j U_j - V \Big\|_{\rm op} \leq b
    \end{aligned}
    \end{equation}
    for some $a, b > 0$. Then, the corresponding channel $\Lambda(\rho) = \sum_{j=1}^m p_j U_j \rho U_j^\dag$ approximates $\mathcal{V}$ as 
    \begin{equation}
        \| \Lambda - \mathcal{V} \|_{\diamond} \leq a^2 + 2b . 
    \end{equation}
\end{lemma}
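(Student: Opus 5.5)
We will prove the bound by writing each $U_j$ as a perturbation of $V$ and expanding the channel difference to second order in the perturbation. Define $E_j := U_j - V$, so $\|E_j\|_{\rm op}\le a$ and $\|\sum_j p_j E_j\|_{\rm op}\le b$. For any input operator $X$ (possibly acting on the system tensored with an ancilla, since the diamond norm requires stabilization), expand
\begin{equation}
U_j X U_j^\dagger = V X V^\dagger + E_j X V^\dagger + V X E_j^\dagger + E_j X E_j^\dagger .
\end{equation}
Averaging over $j$ with weights $p_j$ gives
\begin{equation}
(\Lambda-\mathcal{V})(X) = \bar{E} X V^\dagger + V X \bar{E}^\dagger + \sum_j p_j E_j X E_j^\dagger ,
\end{equation}
where $\bar{E} := \sum_j p_j E_j = \sum_j p_j U_j - V$. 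The identical identity holds for $(\Lambda-\mathcal{V})\otimes \mathrm{id}$, with $E_j$, $\bar E$, and $V$ replaced by $E_j\otimes I$, $\bar E\otimes I$, and $V\otimes I$; these have the same operator norms.

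Next we bound the trace norm of each term for $\|X\|_1\le 1$, using Hölder's inequality $\|AXB\|_1\le \|A\|_{\rm op}\|X\|_1\|B\|_{\rm op}$. The two linear terms each contribute at most $\|\bar E\|_{\rm op}\|V\|_{\rm op}\le b$, since $V$ is unitary. The quadratic term contributes at most $\sum_j p_j \|E_j\|_{\rm op}^2 \le a^2$ by the triangle inequality and $\sum_j p_j=1$. Taking the supremum over $X$ yields $\|\Lambda-\mathcal{V}\|_\diamond\le a^2+2b$.

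There is no genuinely hard step. The one point needing care is the stabilization in the diamond norm: we must check that tensoring with the identity preserves the operator-norm bounds on $E_j$ and $\bar{E}$, which it does because $\|A\otimes I\|_{\rm op}=\|A\|_{\rm op}$. We also need the cancellation of the first-order term: it is exactly the average $\bar{E}$ that appears linearly, so the first-order error is governed by $b$ rather than $a$. This cancellation is the whole point of the lemma.
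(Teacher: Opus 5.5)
Your proof is correct and follows essentially the paper's argument. The paper only cites this lemma, but in Appendix~\ref{x:svr-mixing} it sketches the same expansion, $\Lambda(\rho)-\mathcal V(\rho)=\bar E\rho V^\dagger + V\rho\bar E^\dagger + \sum_j p_j E_j\rho E_j^\dagger$, bounding the two linear terms by $b$ each and the quadratic term by $a^2$; the proof of Lemma~\ref{lemma:mixing-states-cont} runs the same way, and your explicit treatment of the ancilla stabilization via $\|A\otimes I\|_{\rm op}=\|A\|_{\rm op}$ supplies the one detail the paper leaves implicit.
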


\begin{figure}[t]
\centering
\usetikzlibrary{calc}
\begin{tikzpicture}[scale=0.75, every node/.style={font=\small}]


\node[gray!50!black, anchor=west, font=\footnotesize\itshape] at (-4.4,4.0)
    {operator space};

\draw[gray!60, thick, domain=-2.9:2.9, smooth, samples=60]
    plot (\x, {2.2 - 0.25*\x*\x});
\node[gray!50!black, anchor=west, font=\footnotesize] at (2.55,0.55) {unitaries};

\draw[dashed, gray!60] (0,2.2) circle (1.7);
\draw[->, gray!60!black, thin] (0,2.2) -- ({1.7*cos(150)},{2.2+1.7*sin(150)});
\node[gray!40!black, font=\footnotesize] at (-0.90,3.00) {$a$};

\draw[dotted, thick, blue!65!black] (0,2.2) circle (0.45);
\node[blue!65!black, anchor=south west, font=\footnotesize] at (0.308,2.28) {$b$};

\foreach \x in {-1.55, -1.30, 1.20, 1.50} {
  \filldraw[orange!85!black] (\x,{2.2 - 0.25*\x*\x}) circle (1.8pt);
  \draw[gray!50, thin, dashed] (\x,{2.2 - 0.25*\x*\x}) -- (-0.09,1.82);
}
\node[orange!85!black, anchor=north east, font=\footnotesize] at (-2.22,1.48) {$U_j$};

\filldraw[red] (-0.09,1.82) circle (1.8pt);
\draw[red, thin] (-0.08,1.74) -- (0.02,1.44);
\node[red, anchor=north, font=\footnotesize] at (0.10,1.42) {$\sum_j p_j U_j$};

\filldraw[blue] (0,2.2) circle (1.8pt);
\node[blue, anchor=south, font=\footnotesize] at (0.02,2.74) {$V$};

\node[gray!30!black, anchor=west, font=\footnotesize] at (1.65,3.60)
    {$\Lambda(\rho)=\sum_j p_j U_j \rho\, U_j^\dagger$};
\node[gray!30!black, anchor=west, font=\footnotesize] at (1.65,3.05)
    {$\|\Lambda - \mathcal V\|_\diamond \leq a^2 + 2b$};

\end{tikzpicture}
\caption{The Campbell-Hastings mixing lemma (Lemma~\ref{lemma:mixing}). The unitaries $\{U_j\}$ lie on the manifold of unitaries (gray curve) within distance $a$ of the target $V$. Their weighted average $\sum_j p_j U_j$ leaves the manifold and lands within a smaller distance $b$ of $V$ because the first-order errors of $U_j - V$ point in different directions and cancel in the average. The induced channel $\Lambda(\rho) = \sum_j p_j U_j \rho U_j^\dag$ then deviates from the target channel $\mathcal V(\rho) = V\rho V^\dagger$ by only $a^2 + 2b$ in diamond norm.
}
\label{fig:mixing}
\end{figure}

To see how mixing suppresses error, consider an ensemble of unitaries $\{U_j\}_{j=1}^{m}$ each approximating a target unitary $V$ with error $a = \epsilon$, and a probability distribution $\{p_j\}$ such that the average unitary $\sum_j p_j U_j$ suffers error $b = \cO(\epsilon^2)$. The associated random channel $\Lambda(\rho) = \sum_{j=1}^m p_j U_j \rho U_j^\dag$ then deviates from the target channel by a smaller error $a^2 + 2b = \cO(\epsilon^2)$, which is quadratically smaller than the error of any $U_j$. Importantly, $\Lambda $ can be implemented by sampling $j\sim p_j$ and executing the corresponding unitary $U_j$, making it no more costly than executing any single approximation. Figure~\ref{fig:mixing} illustrates the geometry behind this suppression: the individual errors $U_j - V$ point in different directions and their first-order errors cancel on average, so the mixture is closer to the target unitary than any of its constituents.

Therefore, mixing over unitaries confers a quadratic suppression of error as $\epsilon \mapsto \cO(\epsilon^2)$, at only the additional cost of random sampling. In fact, this quadratic suppression is the best possible improvement achievable through randomized compiling; see Appendix~\ref{x:rc-quadratic-optimality} for a proof. 
Given this ability to suppress errors with randomization, recent works have used randomized compiling to accelerate real-time evolution~\cite{Campbell_2019_Random, childs_faster_2019, cho_doubling_2024, Ouyang_2020, kim2026randomizedproduct}, simplify the implementation of rotation gates~\cite{low2021halving}, reduce the cost of quantum signal processing~\cite{martyn_halving_2025}, and improve quantum control techniques~\cite{Yi2026, kim2026randomizedQ}. Here we will use it to reduce the cost of ITE by replacing a portion of imaginary time with randomized real-time evolution.

\section{\label{sec:algorithm}Twirled Imaginary-Time Evolution}

As randomized compiling has been used to improve a variety of quantum algorithms, it is natural to ask if it can also enhance ITE algorithms for ground state preparation. In this section, we provide an affirmative answer by introducing a randomized ITE algorithm with improved performance. Crucially, rather than apply randomized compiling to a specific implementation of ITE, we target the more general task of approximating the ground state from a state evolved in imaginary time, agnostic to the ITE implementation method.
As a result, we develop a randomized algorithm that suppresses errors while remaining compatible with a broad class of ITE implementations, including Trotterization, QITE, QSP, and beyond. Equivalently, this reduces the cost of approximating the ground state to a desired accuracy across this entire family of ITE methods. 

In developing this algorithm, we first motivate a trace-distance formulation of ground state preparation in Sec.~\ref{sec:alg-motivation}, and then present our algorithm and prove its error suppression in Sec.~\ref{sec:alg-main}. Next, we discuss its properties and implementation in Secs.~\ref{sec:alg-remarks} and~\ref{sec:alg-implementation}. We then optimize the probability distribution used in our algorithm in Sec.~\ref{sec:distrib-choice}, and lastly present connections of our algorithm to stabilizer states and the notion of ``structure versus randomness" in Sec.~\ref{sec:stabilizers}.

\subsection{Motivation}\label{sec:alg-motivation}
The key to our improvement is to note the following common, yet often overlooked, limitation of ITE algorithms for ground state estimation: achieving high accuracy in the ground state energy does not, in general, directly translate to a comparable accuracy in the expectation values of other observables. In fact, ITE produces a ground state approximation whose energy is significantly more accurate than other observables. This can lead to situations where the convergence of the energy might suggest that a sufficiently accurate approximation to the ground state has been achieved, while other relevant quantities, such as correlation functions, can remain significantly inaccurate.

To illustrate this point, consider using ITE to prepare an approximate ground state 
\begin{equation}
    \ket{\psi} = \sqrt{1-\delta^2}\ket{\lambda_0}+\delta\ket{\lambda_\perp} ,
\end{equation}
where $|\lambda_\perp\rangle$ is a linear combination of excited eigenstates, and $\delta \ll 1$. This state achieves an energy that differs from the true ground state energy by $\cO(\delta^2)$: 
\begin{equation}\label{eq:energy_error}
    \bra{\psi}\cH\ket{\psi}= \lambda_0+ \delta^2 \big( \langle \lambda_\perp | \mathcal{H} |\lambda_\perp \rangle -\lambda_0 \big) ,  
\end{equation}
which follows from $\bra{\lambda_0}\cH\ket{\lambda_{\perp}}=0$ by orthogonality of eigenstates. 
However, for an arbitrary observable $O$, where $\bra{\lambda_0} O \ket{\lambda_{\perp}} \neq 0$, the corresponding expectation value can deviate from the ground state value by an $\cO(\delta)$ error:
\begin{equation}
    \bra{\psi}O\ket{\psi}=\bra{\lambda_0}O\ket{\lambda_0}+2\delta\cdot \mathrm{Re}\{\bra{\lambda_0}O\ket{\lambda_\perp}\}+\cO(\delta^2) . 
\end{equation}

This discrepancy is captured by the trace distance between $|\psi\rangle$ and the ground state. Here and throughout, the trace distance between two states is defined as $\trd(\rho, \sigma) \equiv \frac{1}{2}\|\rho - \sigma\|_1$, where $\|A\|_1 = \Tr\sqrt{A^\dagger A}$ is the Schatten 1-norm (the sum of the singular values); for pure states it reduces to $ \sqrt{1 - |\langle \psi | \phi\rangle|^2}$. For the scenario above, the trace distance evaluates to:
\begin{equation}\label{eq:d_tr_psi_gs}
    d_{\text{tr}} \big( |\psi\rangle \langle \psi | , \ |\lambda_0 \rangle \langle \lambda_0 | \big) = \frac{1}{2} \big\| |\psi\rangle \langle \psi | - |\lambda_0 \rangle \langle \lambda_0 |  \big\|_1 = \delta,
\end{equation}
and upper bounds the error in an arbitrary observable as 
\begin{equation}
    \big| \langle \psi | O | \psi \rangle - \langle \lambda_0 | O | \lambda_0 \rangle \big| \leq 2 \| O \|_{\rm op} \cdot d_{\text{tr}} \big( |\psi\rangle \langle \psi | , \ |\lambda_0 \rangle \langle \lambda_0 | \big) . 
\end{equation}
In conjunction with Eq.~\eqref{eq:d_tr_psi_gs}, this implies that the error in an arbitrary observable is at most $\cO(\delta)$. We note that this is a worst-case bound, and certain observables, like the energy, can still achieve smaller errors. For generic observables though, the error is constrained to be $\cO(\delta)$.

Overall, this analysis indicates that using energy as a proxy for convergence can be misleading. ITE may appear to converge to the ground state energy, while suffering larger errors in other observables. A more reliable measure of convergence is the trace distance to the ground state. This suggests recasting the problem of ground state preparation with ITE as follows: 
\begin{problem}[Ground State Preparation via ITE]\label{prob:ground-qite}
    Let $\epsilon > 0$ be some small constant. Consider a Hamiltonian $\cH$ that has a gapped, non-degenerate ground state $|\lambda_0\rangle$, and suppose we have the ability to implement the imaginary-time evolution operator $e^{-\beta \cH}$ for any $\beta \geq 0$. Then, given an initial state $|\psi\rangle$ with non-zero ground state overlap $|\langle \psi | \lambda_0\rangle| > 0$, prepare a state $\rho$ such that
    \begin{equation*}
        \trd(\rho, \ketbra{\lambda_0}{\lambda_0})\leq \epsilon.
    \end{equation*}
\end{problem}


%
In general, solving Problem~\ref{prob:ground-qite} requires imaginary time $\beta = \cO(\log (1/\epsilon)/\Delta )$, and a corresponding sample complexity and/or classical cost $e^{\cO(\beta)}$. While the amount of imaginary time is modest, the sample complexity/classical cost grows exponentially with $\beta$, severely limiting the utility of ITE. Our goal is therefore to mitigate this cost by reducing the required imaginary time, while still achieving small trace distance error.

\subsection{Algorithm}\label{sec:alg-main}

Here we introduce \emph{twirled imaginary-time evolution} (TITE) as a randomized algorithm that more efficiently solves Problem~\ref{prob:ground-qite}. The central idea is to use randomized compiling to suppress contributions from high energy eigenstates, thus yielding a more accurate ground state approximation. We achieve this by applying random amounts of real-time evolution atop a state evolved in imaginary time --- that is, applying $e^{-i\cH t}$ for durations $t\sim \cD$ drawn from a probability distribution $\cD$. We refer to this randomization over real evolution times as \textit{temporal twirling}.

To build intuition for this, consider again the approximate ground state $\ket{\psi} = \sqrt{1-\delta^2}\ket{\lambda_0}+\delta\ket{\lambda_\perp} $ obtained by applying ITE to an initial state, where we again let $\delta$ denote the magnitude of the excited contribution, distinct from the final trace distance parameter $\epsilon$ in Problem~\ref{prob:ground-qite}. Applying real-time evolution $e^{-i \cH t}$ to this state imparts a relative phase $e^{-i(\lambda_k-\lambda_0)t}$ on the excited states comprising $|\lambda_\perp\rangle$. Thus, randomizing over the duration $t$ randomizes these phases, which washes out the excited-state contributions. As we will show below, this procedure allows us to quadratically suppress the trace distance error to $\mathcal{O}(\epsilon^2)$.

To make this statement rigorous, we begin by presenting a generalization of the Campbell-Hastings mixing lemma to quantum states, powered by the same error cancellation depicted in Figure~\ref{fig:mixing}:

\begin{lemma}[Mixing Lemma for States]\label{lemma:mixing-states-cont}
Let $\ket{v}$ be a target pure state. Suppose there exists a collection of pure states $\{ |u_j\rangle \}_j$ indexed by $j$ and an associated probability distribution $\cD$ over $j$ that obey 
\begin{equation}\label{eq:mixing-state-cont-condition}
    \begin{aligned}
        &\| |u_j\rangle - |v\rangle \|_2 \leq a  \text{ for all } j , \\
        & \big\| \mathbb{E}_{j \sim \mathcal{D}} \big[ |u_j\rangle \big] - |v\rangle \big\|_2 \leq b , 
    \end{aligned}
\end{equation}
for some $a,b > 0$, where $\| \cdot \|_2$ is the Euclidean 2-norm. Then, it follows that the corresponding mixed state $\rho = \mathbb{E}_{j \sim \mathcal{D}} \big[ |u_j\rangle \langle u_j | \big]$ deviates from $|v\rangle \langle v|$ in trace distance by
\begin{equation}
    d_{\text{tr}}\big( \rho ,  \ |v\rangle \langle v| \big) \leq \frac{a^2}{2} + b. 
\end{equation}
\end{lemma}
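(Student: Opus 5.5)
Write $|u_j\rangle = |v\rangle + |e_j\rangle$ with $\| |e_j\rangle \|_2 \leq a$ and $\| \mathbb{E}_j |e_j\rangle \|_2 \leq b$, and expand the projector:
\begin{equation*}
|u_j\rangle\langle u_j| - |v\rangle\langle v| = |v\rangle\langle e_j| + |e_j\rangle\langle v| + |e_j\rangle\langle e_j| .
\end{equation*}
Averaging over $j \sim \mathcal{D}$ and setting $|\bar e\rangle = \mathbb{E}_j |e_j\rangle$ gives
\begin{equation*}
\rho - |v\rangle\langle v| = |v\rangle\langle \bar e| + |\bar e\rangle\langle v| + \mathbb{E}_j\big[|e_j\rangle\langle e_j|\big] .
\end{equation*}
Taking the trace norm term by term, the first two terms contribute at most $2\||\bar e\rangle\|_2 \leq 2b$. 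The last term is positive semidefinite, so its trace norm equals its trace, $\mathbb{E}_j \||e_j\rangle\|_2^2 \leq a^2$. This already gives $d_{\text{tr}} \leq \tfrac12(2b + a^2) = \tfrac{a^2}{2} + b$, which is exactly the claimed bound. This is the mixing lemma argument with operators replaced by vectors; it parallels Lemma~\ref{lemma:mixing}, but here it is proved directly instead of reduced to it.

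The expected obstacle is that the naive estimate of the cross terms is already sufficient, so the only real points are bookkeeping. First, in the continuous case the expectation $\mathbb{E}_j$ is a Bochner integral, and linearity of the projector expansion together with the triangle inequality for $\|\cdot\|_1$ under integration need to be justified. Both hold because the integrands are bounded and we are in finite dimension. Second, one could sharpen the cross-term bound using normalization: $\| |v\rangle\langle \bar e| + |\bar e\rangle\langle v|\|_1 \leq 2\||\bar e\rangle\|_2$ follows since it is a rank-two operator whose two rank-one pieces each have trace norm $\||\bar e\rangle\|_2$. No such sharpening is needed here.

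Finally, the two summands scale differently. The $a^2/2$ term is quadratic in the individual error, while the $b$ term is linear only in the averaged error. This is exactly what is needed for the later application to temporal twirling, where $b = \mathcal{O}(a^2)$.
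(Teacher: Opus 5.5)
Your proposal is correct and follows the paper's proof essentially step for step. Both expand $|u_j\rangle\langle u_j|$ around $|v\rangle$ (you write the deviation as $|u_j\rangle - |v\rangle$, the paper as $|v\rangle - |u_j\rangle$), apply the triangle inequality in trace norm, and bound the two cross terms by $2b$ using $\| |x\rangle\langle y| \|_1 \le \| |x\rangle \|_2 \, \| |y\rangle \|_2$. The one small difference is in bounding the quadratic term by $a^2$: the paper moves the trace norm inside the expectation, whereas you use positivity, which is immaterial.
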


\begin{proof}
    Let $|w_j\rangle\equiv \ket{v}-\ket{u_j}$. Then, by triangle inequality, the Schatten 1-norm between $\rho$ and $|v\rangle\langle v|$ is upper bounded as
    \begin{align*}
        \biggr\lVert &\E_{j\sim\cD} \big[\ketbra{u_j}{u_j} \big] - \ketbra{v}{v}\biggr\rVert_{1}\\
        &=\biggr\lVert \E_{j\sim\cD} \Big[(\ket{v}-|w_j\rangle)(\bra{v}-\langle w_j |)\Big] - \ketbra{v}{v}\biggr\rVert_{1}\\
        &=\biggr\lVert \ket{v} \E_{j\sim\cD}[\bra{w_j} ]+\E_{j\sim\cD}[|w_j\rangle]\bra{v} -\E_{j\sim\cD}[|w_j\rangle\langle w_j | ]\biggr\rVert_{1}\\
        &\leq \left\lVert\left(\E_{j\sim\cD}\big[|w_j\rangle \big]\bra{v}\right)^\dagger\right\rVert_{1}
        +\left\lVert \E_{j\sim\cD}\big[|w_j\rangle \big]\bra{v}\right\rVert_{1}\\
        &\quad \ +\E_{j\sim\cD}\left[\lVert \ketbra{w_j}\rVert_{1} \right]\\
        &=2\left\lVert \E_{j\sim\cD}\big[|w_j\rangle \big]\bra{v}\right\rVert_{1}+\E_{j\sim\cD}\left[\lVert |w_j\rangle\langle w_j |\rVert_{1}\right] . 
    \end{align*}
    Because the 1-norm is the sum of the singular value magnitudes, it holds that $\lVert \ketbra{x}{y}\rVert_{1}\leq \lVert \ket{x} \rVert_2  \lVert \ket{y} \rVert_2 $ for unnormalized states $\ket{x},\ket{y}$. Applying this to the above inequality yields
    \begin{align*}
        &\leq 2\left\lVert \E_{j\sim\cD}[|w_j\rangle ]\right\rVert_{2}\lVert \bra{v}\rVert_{2}+\E_{j\sim\cD}[\lVert |w_j\rangle\rVert_{2}\lVert \langle w_j |\rVert_{2}]\\
        &\leq 2 \left\lVert \E_{j\sim\cD}[\ket{u_j}]-\ket{v}\right\rVert_{2}\cdot 1+\E_{j\sim\cD}[a^2 ]\\
        &\leq 2b+a^2 . 
    \end{align*}
    Recalling the definition $d_{\text{tr}}\big( \rho ,  \ |v\rangle \langle v| \big) = \frac{1}{2} \big\| \rho - |v\rangle \langle v| \big\|_1$ completes the proof.
\end{proof}

We will use this result to derive our twirled imaginary-time evolution method. In the language of Lemma~\ref{lemma:mixing-states-cont}, we will take the target pure state $\ket{v}$ to be exactly the ground state $\ket{\lambda_0}$. Then, we will find a collection of pure states that obey $a=\cO(\epsilon)$ and $b=\cO(\epsilon^2)$, such that by Lemma~\ref{lemma:mixing-states-cont} the corresponding mixed state solves Problem~\ref{prob:ground-qite} with $\cO(\epsilon^2)$ trace distance error. In more detail, these pure states are taken to be an initial state that is imaginary-time-evolved for duration $\beta$ and subsequently undergoes real-time evolution for a random duration $t_j$ sampled from a distribution $\cD$. The intuition is that the imaginary-time evolution guarantees that each pure state in the ensemble is $\cO(\epsilon)$-close to the ground state, while the random real-time evolution further suppresses the error to $\cO(\epsilon^2)$ for an appropriate distribution $\cD$ according to Lemma~\ref{lemma:mixing-states-cont}.


We now introduce twirled imaginary-time evolution:
\begin{theorem}[Twirled imaginary-time evolution]\label{thm:ground-qite-error-suppression}
Let $\ket{\psi}$ be some initial state with ground state overlap $\left|\bra{\psi}\ket {\lambda_0}\right|>c_{\rm min}>0$. Then, for some small constant $\epsilon >0$, set $\beta=\log(1/(c_{\rm min}\epsilon))/\Delta$, where $\Delta$ is a lower bound on the Hamiltonian's spectral gap: $\lambda_1-\lambda_0 \geq \Delta$. Now, suppose we apply imaginary-time evolution for imaginary time $\beta$ to the initial state to get the normalized state $\ket{\psi(\beta)}\equiv \frac{e^{-\beta \cH} \ket{\psi}}{\|e^{-\beta \cH}\ket{\psi}\|_2}$. Then, 
$$\trd(\ketbra{\psi(\beta)},\ketbra{\lambda_0})\leq \cO(\epsilon).$$
Next, let $\cD$ be some real-valued distribution such that $|\E_{t\sim \cD}[e^{-it\omega}]|\leq \cO(\epsilon)$ for all $\omega\geq \Delta $. Suppose we subsequently apply real-time evolution to $|\psi(\beta)\rangle$ for some random time $t\sim \cD$ to obtain an ensemble of states of the form $\ket{\psi(\beta,t)}\equiv e^{-i\cH t}\ket{\psi(\beta)}$ indexed by $t$. Then, the corresponding mixed state $\rho=\E_{t\sim \cD}[\ketbra{\psi(\beta,t)}]$ obeys
$$\trd(\rho,\ketbra{\lambda_0})\leq \cO(\epsilon^2),$$
which is quadratically closer to the ground state than the traditional imaginary-time-evolved state $\ket{\psi(\beta)}$.
\end{theorem}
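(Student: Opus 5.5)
Our plan is to apply Lemma~\ref{lemma:mixing-states-cont} with target $\ket{v}=\ket{\lambda_0}$ and ensemble $\ket{u_t}=e^{i\lambda_0 t}\ket{\psi(\beta,t)}$. The global phase $e^{i\lambda_0 t}$ leaves each projector $\ketbra{u_t}{u_t}=\ketbra{\psi(\beta,t)}{\psi(\beta,t)}$, and hence $\rho$, unchanged. Its role is to remove the trivial rotation of the ground-state component, so that only the excited-state phases are averaged.

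We also fix the global phase of $\ket{\lambda_0}$ so that $c_0\equiv\braket{\lambda_0}{\psi}>0$. We expand $\ket{\psi}=\sum_k c_k\ket{\lambda_k}$. Then
\begin{equation*}
\ket{\psi(\beta)}=\frac{1}{N}\Big(c_0\ket{\lambda_0}+\sum_{k\ge1}c_k e^{-\beta(\lambda_k-\lambda_0)}\ket{\lambda_k}\Big),\qquad N^2=|c_0|^2+\sum_{k\ge1}|c_k|^2e^{-2\beta(\lambda_k-\lambda_0)} .
\end{equation*}
Write $\ket{\psi(\beta)}=\alpha\ket{\lambda_0}+\ket{r}$ with $\alpha=c_0/N>0$ and $\ket{r}\perp\ket{\lambda_0}$.

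\textbf{First claim.} Since $\lambda_k-\lambda_0\ge\Delta$, we have $\|\ket{r}\|_2\le e^{-\beta\Delta}/N\le e^{-\beta\Delta}/c_{\rm min}=\epsilon$ by the choice of $\beta$. For a pure state the trace distance is $\sqrt{1-\alpha^2}=\|\ket{r}\|_2\le\epsilon$, which gives $\trd(\ketbra{\psi(\beta)},\ketbra{\lambda_0})\le\epsilon$.

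\textbf{The parameter $a$.} We have $\ket{u_t}-\ket{\lambda_0}=(\alpha-1)\ket{\lambda_0}+e^{i\lambda_0 t}e^{-i\cH t}\ket{r}$. Hence
\begin{equation*}
\|\ket{u_t}-\ket{\lambda_0}\|_2^2=(1-\alpha)^2+\|\ket{r}\|_2^2=2(1-\alpha).
\end{equation*}
Using $1-\alpha\le 1-\alpha^2=\|\ket{r}\|_2^2\le\epsilon^2$, we get $a=\sqrt{2}\,\epsilon$, uniformly in $t$.

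\textbf{The parameter $b$.} Averaging over $t$ gives
\begin{equation*}
\E_t[\ket{u_t}]-\ket{\lambda_0}=(\alpha-1)\ket{\lambda_0}+\sum_{k\ge1}\frac{c_k e^{-\beta(\lambda_k-\lambda_0)}}{N}\,\E_t\big[e^{-it(\lambda_k-\lambda_0)}\big]\ket{\lambda_k}.
\end{equation*}
The first term has norm at most $\epsilon^2$. In the second term each $\omega_k=\lambda_k-\lambda_0$ satisfies $\omega_k\ge\Delta$, so the hypothesis gives $|\E_t[e^{-it\omega_k}]|\le C\epsilon$. Because the $\ket{\lambda_k}$ are orthonormal, the norm of this term is at most $C\epsilon\|\ket{r}\|_2\le C\epsilon^2$. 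Therefore $b\le(1+C)\epsilon^2$.

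\textbf{Conclusion and main obstacle.} Lemma~\ref{lemma:mixing-states-cont} then yields $\trd(\rho,\ketbra{\lambda_0})\le a^2/2+b\le(2+C)\epsilon^2=\cO(\epsilon^2)$. The only delicate step is choosing the phases correctly. Without the factor $e^{i\lambda_0 t}$ and the gauge $c_0>0$, the average $\E_t[\ket{\psi(\beta,t)}]$ also damps the ground-state amplitude, and $b$ would not be small. These phase choices are legitimate because the lemma depends only on the projectors $\ketbra{u_t}{u_t}$.
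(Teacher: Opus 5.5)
Your proposal is correct and follows the paper's proof essentially step for step: the paper also fixes the gauge so the ground-state amplitude is real and positive (writing $\ket{\psi(\beta)}=\sqrt{1-\delta^2}\ket{\lambda_0}+\delta\ket{\lambda_\perp(\beta)}$), absorbs the phase $e^{-i\lambda_0 t}$, and applies Lemma~\ref{lemma:mixing-states-cont} with $a=\cO(\epsilon)$ and $b=\cO(\epsilon^2)$. Your version differs only in making the constants explicit ($a=\sqrt{2}\,\epsilon$, $b\le(1+C)\epsilon^2$, final bound $(2+C)\epsilon^2$), using the exact identity $\|\ket{u_t}-\ket{\lambda_0}\|_2^2=2(1-\alpha)$ where the paper uses a triangle-inequality estimate.
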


\begin{figure}[t]
\centering
\begin{tikzpicture}[scale=0.75, every node/.style={font=\small}]
\usetikzlibrary{calc}

\coordinate (gs) at (0,0);
\coordinate (rho) at (0.35,-0.24);

\draw[dashed, blue!65!black] (gs) circle (3.2);
\node[blue!65!black] at ($(gs)+(290:2.82)$) {$\cO(\epsilon)$};

\draw[dotted, thick, blue!65!black] (gs) circle (1.4);
\node[blue!65!black] at ($(gs)+(-0.80,-0.25)$) {$\cO(\epsilon^2)$};

\foreach \ang in {260,320,20,80,140} {
  \draw[gray!50, thin, dashed] ($(gs)+(\ang:3.2)$) -- (rho);
  \filldraw[gray!70!black] ($(gs)+(\ang:3.2)$) circle (1.7pt);
}
\draw[gray!50, thin, dashed] ($(gs)+(200:3.2)$) -- (rho);

\filldraw[orange!85!black] ($(gs)+(200:3.2)$) circle (2pt);
\node[anchor=south east, text=orange!85!black] at ($(gs)+(208:3.3)+(-0.12,0.33)$) {$\ket{\psi(\beta)}$};

\filldraw[red!100!black] (rho) circle (1.8pt);
\node[anchor=north west, text=red] at ($(rho)+(0.12,+0.179)$) {$\rho$};

\filldraw[blue] (gs) circle (1.6pt);
\node[anchor=south, text=blue] at ($(gs)+(0,0.30)$) {$\ket{\lambda_0}$};

\coordinate (psi) at (-6.0,-4.8);
\filldraw[green!55!black] (psi) circle (1.6pt);
\node[anchor=north east, text=green!55!black] at ($(psi)+(-0.1,-0.1)$) {$\ket{\psi}$};
\draw[->, thick, red!70!black, dashed] (psi) to[out=55, in=230]
  node[midway, above, sloped, font=\small, fill=white, inner sep=1pt] {$e^{-\beta \cH}$} ($(gs)+(200:3.2)$);

\end{tikzpicture}
\caption{Illustration of the twirled imaginary time evolution (TITE) algorithm (cf.\ Theorem~\ref{thm:ground-qite-error-suppression}). Imaginary time evolution maps the initial state $\ket{\psi}$ to $\ket{\psi(\beta)}$, within distance $\cO(\epsilon)$ of $\ket{\lambda_0}$. Applying real-time evolution for random durations $t\sim\cD$ produces an ensemble of states $\ket{\psi(\beta,t)}$ (gray points) spread around this same neighborhood, each differing from $\ket{\psi(\beta)}$ only by relative phases on its excited-state components. Because $\cD$ is chosen so that these phases average out, the ensemble-averaged state $\rho = \E_{t\sim \cD}[\ketbra{\psi(\beta,t)}{\psi(\beta,t)}]$ (red dot) lies within a much smaller distance $\cO(\epsilon^2)$ of the ground state.} 
\label{fig:algorithm}
\end{figure}

Figure~\ref{fig:algorithm} illustrates the intuition behind this result: imaginary-time evolution first brings the state to within $\cO(\epsilon)$ of the ground state, and then randomizing the subsequent real-time evolution averages away the relative phases on the excited-state components, suppressing the trace distance to $\cO(\epsilon^2)$. We can now rigorously prove the performance of TITE.

\begin{proof}
We begin by writing the initial state in the energy basis: 
$$\ket{\psi}=\sum_{k}c_k\ket{\lambda_k}$$ 
Then, applying imaginary-time evolution, up to global phase, 
\begin{equation}\label{eq:ite-state}
\begin{aligned}
        \ket{\psi(\beta)}&=\frac{c_0\ket{\lambda_0}+\sum_{k=1}c_ke^{-\beta(\lambda_k-\lambda_0)}\ket{\lambda_k}}{\sqrt{|c_0|^2+\xi^2}}\\
        &=\sqrt{1-\delta^2}\ket{\lambda_0}+\delta\ket{\lambda_{\perp}(\beta)}
\end{aligned}    
\end{equation}
where:
\begin{align*}
        \ket{\lambda_\perp(\beta)}&=\frac{c_0^*}{\xi|c_0|}\sum_{k=1}c_ke^{-\beta(\lambda_k-\lambda_0)}\ket{\lambda_k} , 
\end{align*}  
and $\delta\equiv \frac{\xi}{\sqrt{|c_0|^2+\xi^2}}$ and $\xi\equiv \sqrt{\sum_{k=1}|c_k|^2e^{-2\beta(\lambda_k-\lambda_0)}} \leq e^{-\beta \Delta}$. Because $\beta=\log(1/(c_{\rm min}\epsilon))/\Delta$ and $|c_0| > c_{\text{min}}$, it follows that $\delta\leq \frac{e^{-\beta \Delta}}{|c_0|}\leq \epsilon$. Then, 
$$\trd\big(\ketbra{\psi(\beta)},\ \ketbra{\lambda_0}\big)\leq \cO(\delta)=\cO(\epsilon),$$
which proves the baseline claim of Theorem~\ref{thm:ground-qite-error-suppression}. 

Now, consider the subsequent application of real-time evolution for some time $t$. 
\begin{align*}
    &\ket{\psi(\beta,t)}\\
    &=\sqrt{1-\delta^2} e^{-i \lambda_0 t}\ket{\lambda_0}+\frac{\delta}{\xi}\sum_{k=1} e^{-i\lambda_k t}c_k e^{-\beta(\lambda_k-\lambda_0)}\ket{\lambda_k }\\
    &=\sqrt{1-\delta^2}\ket{\lambda_0}+\delta \cdot \underbrace{\frac{1}{\xi}\sum_{k=1}e^{-i(\lambda_k-\lambda_0) t } c_k  e^{-\beta(\lambda_k-\lambda_0) }\ket{\lambda_k }}_{\ket{\lambda_\perp(\beta,t)}}
\end{align*}
where we have absorbed the global phase $e^{-i\lambda_0 t}$ arising from the ground state. We see that real-time evolution applies a phase of $e^{-i(\lambda_k-\lambda_0) t}$ to the $k$-th excited state contribution, where $\lambda_k-\lambda_0$ is the energy difference between the $k$-th excited state and the ground state, which is lower bounded by $\Delta$ for all $k$. The key idea is then that by randomizing the real-time evolution by drawing $t\sim \cD $, the magnitude of the expected phase factor is $\cO(\epsilon)$, because $|\E_{\cD}[e^{-it\omega}]|\leq \cO(\epsilon)$ for any $\omega\geq \Delta$. This is the heart of our quadratic error suppression, which we are then able to translate into an error suppression of trace distance via the mixing lemma for states.

Explicitly, we apply Lemma~\ref{lemma:mixing-states-cont} by taking the target pure state $\ket{v}$ to be the ground state $\ket{\lambda_0}$, and the ensemble $\{\ket{u_j}\}_{j\sim \cD}$ to be $\{\ket{\psi(\beta,t)}\}_{t\sim \cD}$. The first condition of Eq.~\eqref{eq:mixing-state-cont-condition} is satisfied with $a=\cO(\epsilon)$ because real-time evolution does not significantly degrade an approximation to the ground state. Formally, for any time $t$, the triangle inequality gives
\begin{align*}
&\|\ket{\lambda_0}-\ket{\psi(\beta,t)}\|_2\\
&=\|(1-\sqrt{1-\delta^2})\ket{\lambda_0}-\delta \ket{\lambda_\perp(\beta,t)}\|_2\\
&\leq \cO(\delta^2)+\cO(\delta) \leq \cO(\epsilon) . 
\end{align*}
Next, the second condition of \eqref{eq:mixing-state-cont-condition} is satisfied with $b=\cO(\epsilon^2)$ by the properties of $\cD $, which means that the average state of the ensemble achieves quadratically suppressed error to the ground state. Formally, 
\begin{align*}
    &\big\|\E_{t\sim \cD}[\ket{\psi(\beta,t)}]-\ket{\lambda_0}\big\|_2\\
    &=\bigg\|(\sqrt{1-\delta^2}-1)\ket{\lambda_0}\\
    &\hspace{1cm}+\delta\cdot \frac 1 \xi \sum_{k=1}\underbrace{\E_{t\sim \cD}[e^{-i(\lambda_k-\lambda_0)t}]}_{\cO(\epsilon)}c_k e^{-\beta(\lambda_k-\lambda_0)}\ket{\lambda_k }\bigg\|_2\\
    &\leq \cO(\delta^2)+\delta\cdot \cO(\epsilon)\cdot \|\ket{\lambda_\perp(\beta)}\|_2\\
    &=\cO(\epsilon^2)
\end{align*}
where we use the fact that $|\E_{t\sim \cD}[e^{-it\omega}]|\leq \cO(\epsilon)$ for all $\omega\geq \Delta$, from the stated assumptions on our distribution $\cD$. Thus, Lemma~\ref{lemma:mixing-states-cont} can be applied for $a=\cO(\epsilon)$ and $b=\cO(\epsilon^2)$ giving:
\begin{align*}
    \trd(\rho,\ketbra{\lambda_0}) = \frac{1}{2} \big\| \rho - \ketbra{\lambda_0} \big\|_1 \leq \frac {a^2}{2}+b=\cO(\epsilon^2), 
\end{align*}
which achieves our stated quadratic error suppression.
\end{proof}

Let us take a minute to analyze this result. Theorem~\ref{thm:ground-qite-error-suppression} indicates that the quadratic error suppression provided by TITE can be achieved using only additional real-time evolution, with no increase in imaginary-time evolution. Equivalently, because $\beta$ scales as $\cO(\log(1/\epsilon)/\Delta)$, the amount of imaginary time required to achieve a target error of $\epsilon$ can be halved ($\beta\to\beta/2$), at the expense of performing real-time evolution. Because real-time evolution is unitary and the sample complexity/classical cost of imaginary-time evolution algorithms scales as $e^{\cO(\beta)}$, TITE provides a quadratic improvement in the complexity of any black-box ITE algorithm.

\subsection{Remarks}\label{sec:alg-remarks}

It is important to observe that TITE does not decrease the energy because real-time evolution conserves energy, but instead improves estimation of all other observables by suppressing the trace distance to the ground state. From a thermodynamic perspective, TITE increases the entropy of the state (through randomization) to achieve this improvement. Moreover, in Appendix~\ref{x:rc-quadratic-optimality}, we show that this quadratic suppression of trace distance error is optimal, and that a generic super-quadratic suppression is impossible using the methods shown here. 

Furthermore, we note that the temporal twirling channel, namely $\E_{t\sim\cD }[e^{-i\cH t} \rho e^{i\cH t}]$, is capable of suppressing errors in approximate ground states, whether from imaginary-time evolution or other methods. Beyond ground states, this method also applies to any gapped excited state, so long as the conditions of Theorem~\ref{thm:ground-qite-error-suppression} hold for the excited state and its spectral gap.

Moreover, we mention that our algorithm is distinct from the random imaginary-time evolution algorithm developed in Ref.~\cite{Huang_2023}. There, the authors merge randomized Trotterization (specifically the QDrift algorithm~\cite{Campbell_2019_Random}) with quantum imaginary-time evolution~\cite{Motta_2019} to reduce its gate count. In contrast, our approach applies random real-time evolution, which may or may not employ Trotterization, on top of imaginary-time evolution to refine approximations to the ground state. In fact, because our approach is broadly compatible with realizations of imaginary-time evolution, it could be applied on top of the algorithm of Ref.~\cite{Huang_2023} to further enhance it.

Beyond ITE, our algorithm connects to further concepts in quantum information and computational physics. For one, the temporal twirling channel can be interpreted in the Heisenberg picture where it maps an observable as $ O\mapsto \E_{t \sim \cD}[e^{i\cH t} O e^{-i\cH t}]$. This map is known as the weighted operator Fourier transform, which has emerged as a key idea in recent work on Gibbs sampling \citep{jiang2026predicting, chen2023quantum}. This connection invites a broader investigation into ways one can leverage classical randomness to suppress errors in Gibbs sampling. Furthermore, it is worth noting a conceptual parallel between TITE and hybrid Monte Carlo (HMC) methods for path integration~\cite{Duane_1987}. There, the target ensemble is defined by an imaginary time (Euclidean) path integral weight $e^{-S_E}$, which is expensive to sample directly; HMC instead introduces fictitious momenta and evolves the system along cheap, energy-conserving trajectories to propose large, decorrelated moves, before correcting for integration error. Randomizing the trajectory length in HMC is furthermore known to be important for avoiding resonant, poorly-mixing behavior~\cite{Bou_Rabee_2017}. Our algorithm shares this basic division of labor: imaginary time defines and projects onto the target state, while cheap, unitary real-time evolution does the remaining work. 


\subsection{Implementation Considerations}\label{sec:alg-implementation}
Because Theorem~\ref{thm:ground-qite-error-suppression} is agnostic to the particular implementation of the imaginary-time evolution operator $e^{-\beta \cH}$, our TITE algorithm can be applied atop any existing ITE technique. However, careful consideration is still required when implementing TITE, because the choice of ITE implementation will affect the requisite gate count and circuit depth. Certain implementations may be better suited for particular platforms, depending on, e.g., their architecture, ability to support mid-circuit measurements, and noise resilience. 

Theorem~\ref{thm:ground-qite-error-suppression} also holds for any choice of real-time evolution algorithm. However, because these algorithms inevitably suffer systematic error (e.g., Trotter error), their accuracy must be sufficient to maintain the error suppression of Theorem~\ref{thm:ground-qite-error-suppression}. Straightforwardly, it can be seen that maintaining this error suppression requires a real-time evolution error $\cO(\epsilon^2)$, which does not violate the conditions of the mixing lemma for $a=\cO(\epsilon)$ and $b=\cO(\epsilon^2)$. 


Furthermore, while our TITE framework applies real-time evolution after imaginary time, this order can be relaxed. In fact, any interleaving of imaginary-time and real-time evolution steps is valid because $e^{i\cH t}$ commutes with $e^{-\beta \cH}$. In practice, however, these operations are implemented approximately, and interleaving them may induce errors that must be analyzed carefully. One advantage of performing all imaginary-time evolution first is that no real-time evolution effort is wasted if a post-selection step fails. On the other hand, it is possible that interleaving real-time and imaginary-time steps may offer improved noise resilience, suggesting a potential trade-off that warrants further investigation.

Let us also consider the real-time evolution overhead required by TITE. Suppose $C_{\rm RTE}(t,\epsilon)$ is the cost of implementing real-time evolution $e^{-i\cH t}$ to error at most $\epsilon$. For example, using first-order Trotterization, we have $C_{\rm RTE}(t,\epsilon) = \mathcal{O}(t^2/\epsilon)$. In TITE, the expected cost of real-time evolution is $C_{\rm RTE}(\epsilon)\equiv \E_{t\sim\cD}[C_{\rm RTE}(t,\epsilon)]$. For most practical real-time evolution algorithms, $C_{\rm RTE}(t,\epsilon)$ scales with $t$ as $\cO (t^\alpha)$ for an exponent $\alpha \in [1,2]$, so the expected overhead depends on the $\alpha$-th absolute moment of the distribution $\cD$, namely $\E_{t\sim\cD}[|t|^\alpha]$. In Section~\ref{sec:distrib-choice}, we construct distributions $\cD$ which minimize this cost.

\subsection{Optimizing the Distribution $\cD$}\label{sec:distrib-choice}
Theorem~\ref{thm:ground-qite-error-suppression} presupposes the existence of a distribution $\cD $ that obeys 
\begin{equation}\label{eq:D-phi-decay}
 |\E_{t\sim \cD}[e^{-i\omega t}]|\leq \cO(\epsilon)\quad \forall \omega\geq \Delta . 
\end{equation}
We can frame this condition in terms of the distribution's characteristic function $\varphi_{\cD}(\omega)\equiv \E_{t\sim \cD}[e^{-i\omega t}]$, which is the Fourier transform of the probability density: $|\varphi _{\cD}(\omega) | \leq \cO (\epsilon)$ for $\omega \geq \Delta$. However, it is not clear how to construct such a distribution a priori, nor how to minimize its corresponding real-time evolution cost $C_{\rm RTE}(\epsilon)$. As we discussed above, if the real-time evolution cost scales as $t^\alpha$, then $C_{\rm RTE}(\epsilon)$ depends on the $\alpha$-th absolute moment of the distribution $\cD$. Because $\alpha \in [1,2]$ in practice, this task is equivalent to minimizing the first and second absolute moments among the class of real-valued distributions with small tails in their characteristic functions. 

A straightforward way of satisfying Eq.~\eqref{eq:D-phi-decay} is to choose a distribution such that $\varphi_{\cD}(\omega)$ vanishes for all $\omega \geq \Delta$. We show in Appendix~\ref{x:distrib} Lemma~\ref{lemma:sinc-distrib} that the distribution $S_\Delta$ defined by the following simple probability density yields a characteristic function which vanishes outside of $[-\Delta,\Delta]$:
\begin{equation}
f_{S_\Delta}(t)\equiv \frac{3\Delta}{8\pi}\,{\rm sinc}^4(\Delta t/4).
\end{equation}
where ${\rm sinc}(x)=\sin(x)/x$ for $x\neq 0$ and ${\rm sinc}(0)=1$.
In fact, the absolute moments of this distribution scale as $\E_{\cD}[|t|^\alpha]=\cO(\Delta^{-\alpha})$, which we show is optimal up to constant factors in Appendix~\ref{x:distrib} Lemma~\ref{lemma:distrib-opt}. 
We use the notation $S_\Delta$ to emphasize that the distribution is constructed from the ${\rm sinc}(\cdot)$ function and only depends on the spectral gap lower bound $\Delta$. 

Of course, sampling $S_\Delta $ is impossible if $\Delta$ is unknown. If this is the case, we cannot simply choose some imaginary time $\beta=\Omega(\log(1/\epsilon)/\Delta)$ to achieve target error $\epsilon$. Instead, one typically chooses an educated guess for $\beta$, increasing its value until relevant observables noticeably converge. Now, for an arbitrary $\beta$, imaginary-time evolution gives a ground state preparation error of $\epsilon =\cO(e^{-\beta \Delta})$ by Eq.~\eqref{eq:ite-state}. However, TITE can still yield a quadratic error suppression to achieve a ground state error of $\cO(e^{-2\beta \Delta})$, so long as $|\varphi_{\cD}(\omega)|\leq \cO(e^{-\beta \Delta})$ for all $\omega \geq \Delta$. Since $\Delta$ is not known, any $\cD $ that is independent of $\Delta$ yet satisfies this condition must satisfy
\begin{equation}\label{eq:no-delta-decay}
    |\varphi_\cD (\omega)|\leq ce^{-\beta \omega}\quad \forall \omega>0
\end{equation}
for some constant $c>0$. 
We show in Appendix~\ref{x:distrib} Lemma~\ref{lemma:gcauchy} that the distribution $C_\beta$ defined by the following probability density satisfies these constraints for $c=\sqrt 2 $:
\begin{equation}\label{eq:f_C_beta}
    f_{C_\beta}(t)=\frac{1}{\beta\pi}\cdot \frac{1}{1+(t/\beta)^4/4} , 
\end{equation}
with moments $\E_{\cD}[|t|^\alpha]=\cO(\beta^\alpha)$ for $\alpha \in [1,2]$, which is optimal up to constant factors. We use the notation $C_\beta$ to emphasize that the distribution resembles the Cauchy distribution and only depends on $\beta$. 

Thus, $\cD=S_\Delta$ is the optimal choice of distribution when a bound $\Delta $ on the spectral gap is known, while $C_\beta$ should be used in the absence of such a bound. 
For a more familiar distribution, we also show in Appendix~\ref{x:distrib} Lemma~\ref{lemma:normal-distrib} that a normal distribution $\mathcal N_{\beta,\Delta}$ suffices to achieve quadratic error suppression, albeit suboptimally, if $\Delta$ is known. Figure~\ref{fig:distrib-schematic} summarizes these three constructions in terms of the decay condition of Eq.~\eqref{eq:D-phi-decay}.

\begin{figure}[t]
\centering
\begin{tikzpicture}[scale=1.15, every node/.style={font=\small}]
\usetikzlibrary{calc}

\def\xmax{6.6}
\def\ymax{3.35}
\def\Dx{4.0}   
\def\epsy{0.45} 
\def\topy{2.7}  

\fill[gray!25] (\Dx,\epsy) rectangle (\xmax,\ymax);
\node[align=center, text=gray!55!black, font=\scriptsize] at ($(\Dx,\epsy)+(1.2,0.55)$) {excluded by\\ Eq.~\eqref{eq:D-phi-decay}};

\draw[->] (0,0) -- (\xmax+0.2,0) node[right] {$\omega$};
\draw[->] (0,0) -- (0,\ymax) node[above] {$|\varphi_\cD(\omega)|$};

\draw[dashed] (\Dx,0) -- (\Dx,\ymax);
\node[below] at (\Dx,0) {$\Delta$};
\draw[dashed] (0,\epsy) -- (\xmax,\epsy);
\node[left] at (0,\epsy) {$\epsilon$};
\draw (0,\topy) -- (0.08,\topy) node[right=2pt, xshift=-18pt] {$1$};

\draw[very thick, green!55!black] plot [smooth] coordinates {
  (0,\topy) (0.6,2.55) (1.2,2.25) (1.8,1.8) (2.4,1.25) (2.9,0.75) (3.4,0.35) (3.8,0.08) (\Dx,0)
};
\draw[very thick, green!55!black] (\Dx,0) -- (\xmax,0);

\draw[very thick, blue] plot [smooth] coordinates {
  (0,\topy) (0.6,2.2) (1.2,1.75) (1.8,1.35) (2.4,1.0) (3.0,0.72) (3.6,0.52) (\Dx,0.42) (4.6,0.30) (5.2,0.20) (6.0,0.12)
};

\draw[very thick, orange] plot [smooth] coordinates {
  (0,\topy) (0.8,2.6) (1.6,2.35) (2.4,1.95) (3.0,1.55) (3.5,1.05) (3.8,0.65) (\Dx,0.44) (4.4,0.28) (5.0,0.16) (6.0,0.06)
};

\draw[very thick, orange] (0.2,3.05) -- (0.5,3.05);
\node[anchor=west] at (0.58,3.05) {$\mathcal N_{\beta,\Delta}$};
\draw[very thick, green!55!black] (1.95,3.05) -- (2.25,3.05);
\node[anchor=west] at (2.33,3.05) {$S_\Delta$};
\draw[very thick, blue] (2.95,3.05) -- (3.25,3.05);
\node[anchor=west] at (3.33,3.05) {$C_\beta$};

\end{tikzpicture}
\caption{Schematic comparison (not to scale) of the characteristic function magnitude $|\varphi_\cD(\omega)|=|\E_{t\sim \cD}[e^{-i\omega t}]|$ for the three distributions constructed in Sec.~\ref{sec:distrib-choice}. All satisfy $\varphi_\cD(0)=1$ and decay to at most $\cO(\epsilon)$ for $\omega\geq \Delta$ (shaded region), as required by Eq.~\eqref{eq:D-phi-decay}. $C_\beta$ and $\mathcal N_{\beta,\Delta}$ decay smoothly toward zero while $S_\Delta$ vanishes for $\omega\geq\Delta$.
}
\label{fig:distrib-schematic}
\end{figure}

\subsection{Generalization to Stabilizer States and Beyond}\label{sec:stabilizers}

We have thus far presented how randomized real-time evolution enhances imaginary-time evolution. Beyond this specific application, our techniques connect with the broader concept of stabilizer states from quantum error correction.
To see this connection, let us assume without loss of generality that the ground state energy is $\lambda_0 = 0$, such that the ground state is invariant under real-time evolution. Equivalently, the ground state is the unique state \textit{stabilized} by the set of unitaries $\{ e^{-i \cH t} \}_{t\in \mathbb{R}}$: 
\begin{equation}
    e^{-i\mathcal{H}t} |\lambda_0\rangle = |\lambda_0\rangle  \ \  \forall t \in \mathbb{R} . 
\end{equation}
In TITE, we used this property to improve approximations of the ground state by randomly applying elements from the stabilizer set $\{ e^{-i \cH t} \}_{t\in \mathbb{R}}$.

This view suggests a broader generalization to stabilizer states. Suppose we have a set of unitary operators $\{ V_i \}$ that stabilize a target state $|\psi\rangle$ as $V_i |\psi\rangle = |\psi \rangle$ for all $i$. If we can only prepare an approximate state $|\tilde{\psi}\rangle$ that suffers trace distance error $d_{\text{tr}}(|\psi\rangle \langle \psi |, |\tilde{\psi} \rangle \langle \tilde{\psi} |) = \epsilon$, then one can improve this approximation by randomly applying the stabilizers $V_i$ to $| \tilde{\psi} \rangle$. Under suitable conditions on the sampling distribution (analogous to Lemma~\ref{lemma:mixing-states-cont}), this produces a state with smaller trace distance error $\cO(\epsilon^2)$, effectively distilling out a better approximation to $|\psi\rangle$.

This can be seen as a form of error mitigation applied to stabilizer states. Unlike standard stabilizer error correction, where stabilizers are measured to diagnose and correct errors, we apply stabilizers at random to suppress errors. There are, however, two key distinctions. First, this method offers error mitigation rather than full error correction: it achieves only quadratic error suppression and cannot be iterated indefinitely to arbitrarily reduce error. Second, we do not require that the stabilizing unitaries $V_i$ have eigenvalues $\pm1 $ as in typical stabilizer codes. Instead, they may have eigenvalues that are arbitrary complex phases, which broadens the scope of applicability (e.g., to unitaries generated by time evolution or other symmetry generators).

For completeness, we note that the use of randomization to improve stabilizer state preparation was previously explored in Ref.~\cite{greene2021error} under the name ``quantum measurement emulation''. Their approach seeks to combat gate noise, whereas here we use this idea to reduce algorithmic-level error in approximating a ground state. In addition, the construction of Ref.~\cite{greene2021error} assumes stabilizers with eigenvalues $\pm 1$, whereas we allow these to be arbitrary phases. This enables our formalism to target states stabilized by continuous symmetries, such as time evolution or translation invariance.

Beyond quantum information, the perspective developed in this section---namely, refining an approximation to a target state by randomizing over operations that stabilize it---admits a precise parallel with a foundational paradigm of modern additive combinatorics known as the \textit{structure versus randomness} dichotomy~\cite{tao2007structure}. In this paradigm, an arbitrary vector $f$ in a Hilbert space is decomposed as $f = f_{\rm str} + f_{\rm psd}$ (plus possibly a small residual error) with respect to a designated class of basic structured vectors. Here, $f_{\rm str}$ is a \textit{structured} vector, meaning that it is a controlled linear combination of basic structured vectors, while $f_{\rm psd}$ is a so-called \textit{pseudorandom} vector in the sense that its inner product with every basic structured vector is small in magnitude. (Note that the term ``pseudorandom'' here refers to its use in additive combinatorics~\cite{tao2007structure}, and should not be confused with its distinct meaning in quantum information).



Our TITE algorithm instantiates this dichotomy in the Hilbert space of the quantum system. The imaginary-time-evolved state $|\psi(\beta)\rangle = \sqrt{1-\delta^2}\ket{\lambda_0} + \delta \ket{\lambda_\perp(\beta)}$ is already a structure-plus-error decomposition, with the target $\ket{\lambda_0}$ playing the role of $f_{\rm str}$. The deficiency of plain ITE is that its error term is coherent and scales at first order in $\delta$, precisely because it is not pseudorandom. The randomized real-time evolution in TITE repairs this by rendering the error incoherent, scaling as $\cO(\delta^2) $, and thereby fitting naturally into the structure versus randomness decomposition. We expand on this connection in the context of the mixing lemma in Appendix~\ref{x:svr-mixing}.

\section{\label{sec:sim}Numerical Experiments}
Having proven the theoretical benefits of our TITE algorithm, we now benchmark it numerically. Our study proceeds in two stages. First, noisy circuit-level simulations establish the quadratic advantage under realistic conditions, including gate noise (Figure~\ref{fig:sim-noisy}). Second, noiseless simulations isolate the two systematic effects that can limit this advantage: Trotterization error (Figure~\ref{fig:sim-trotter-v-ideal}) and the choice of the twirling distribution $\cD$ (Figure~\ref{fig:sim-distrib}). We introduce the benchmark model in Sec.~\ref{sec:sim-model}, describe the circuit constructions and the two simulation regimes in Sec.~\ref{sec:sim-methods}, and present the results in Sec.~\ref{sec:sim-results}.

\subsection{Model and Setup}\label{sec:sim-model}
We consider preparing the ground state of the 1D mixed-field Ising model with open boundary conditions:
\begin{equation}\label{eq:tfim}
    \cH=-J\sum_{i=1}^{n-1}Z_iZ_{i+1}-h_x\sum_{j=1}^{n}X_j-h_z\sum_{j=1}^nZ_j
\end{equation}
We benchmark the model on $n=10$ spins. We also select $J=1$ and magnetic field strengths $h_x=1.4$ and $h_z=0.6$, at which point this model is non-integrable.

\subsection{Circuit Implementation and Simulation Methods}\label{sec:sim-methods}
Any implementation of TITE first requires the choice of a gadget to realize $e^{-i\cH t}$ and $e^{-\beta \cH}$. In both cases, we first Trotterize each operator using a $2$nd order Trotter-Suzuki decomposition, with $r_{\rm RTE}=100$ and $r_{\rm ITE}=15$ Trotter steps respectively; the regime in which this decomposition error dominates is assessed in Figure~\ref{fig:sim-trotter-v-ideal}. It then suffices to implement $e^{-iP\Delta t }$ and $e^{-\Delta \beta  P}$, where $P$ is a Pauli term from Eq.~\eqref{eq:tfim}. By appropriately conjugating these operations, we can take either $P=Z_iZ_{i+1}$ or $P=Z_i$. For real-time evolution, $e^{-iP\Delta t}$ can be implemented with a CNOT ladder to compute parity, followed by an $R_Z(\phi_{\rm RTE})$ gate where $\phi_{\rm RTE}=2\Delta t$, and another CNOT ladder to uncompute the parity~\citep{whitfield2011simulation}. Ref.~\cite{leadbeater2024non} presents an almost identical gadget for imaginary-time evolution $e^{-\Delta \beta P}$ by replacing the $R_Z(\phi_{\rm RTE})$ gate with an open-controlled $R_X(\phi_{\rm ITE})$ gate applied to an ancilla, which is measured and post-selected to be $0$, with $\phi_{\rm ITE}=2\arccos( e^{-2\Delta \beta})$. This post-selection is responsible for both the non-unitary nature of $e^{-\Delta \beta P}$ on the system qubits and its associated cost, which is embodied in a compounding post-selection probability as the gadget is repeated throughout the circuit. In any case, for any time $t$ sampled from $\cD$, these gadgets, illustrated in Figure~\ref{fig:rte-ite-gadgets}, allow us to construct a circuit for $e^{-i\cH t}e^{-\beta \cH}$. For the mixed-field Ising model, these circuits solely involve the gates $\{H, X, R_X, R_Z,{\rm CNOT}\}$.

\begin{figure}[t]
\centering
\small


\subfloat[$e^{-i\Delta t X}$]{%
\begin{minipage}{\columnwidth}
\centering
\[
\begin{quantikz}[row sep=0.28cm, column sep=0.33cm]
& \gate{H} & \gate{R_Z(\phi_{\rm RTE})} & \gate{H} & \qw
\end{quantikz}
\]
\end{minipage}}




\subfloat[$e^{-\Delta\beta X}$]{%
\begin{minipage}{\columnwidth}
\centering
\[
\begin{quantikz}[row sep=0.32cm, column sep=0.33cm]
& \gate{H} & \gate{X} & \ctrl{1} & \gate{X} & \gate{H} & \qw \\
\lstick{$\ket{0}$} &
\qw & \qw & \gate{R_X(\phi_{\rm ITE})} & \meterD{0}
\end{quantikz}
\]
\end{minipage}}

\subfloat[$e^{-i\Delta t ZZ}$]{%
\begin{minipage}{\columnwidth}
\centering
\[
\begin{quantikz}[row sep=0.28cm, column sep=0.33cm]
& \ctrl{1} & \qw & \ctrl{1} & \qw \\
& \targ{} & \gate{R_Z(\phi_{\rm RTE})} & \targ{} & \qw
\end{quantikz}
\]
\end{minipage}}


\subfloat[$e^{-\Delta\beta ZZ}$]{%
\begin{minipage}{\columnwidth}
\centering
\[
\begin{quantikz}[row sep=0.32cm, column sep=0.33cm]
& \ctrl{1} & \qw & \qw & \qw & \ctrl{1} & \qw \\
& \targ{} & \gate{X} & \ctrl{1} & \gate{X} & \targ{} & \qw \\
\lstick{$\ket{0}$} &
\qw & \qw & \gate{R_X(\phi_{\rm ITE})} & \meterD{0}
\end{quantikz}
\]
\end{minipage}}

\caption{Circuit gadgets for implementing real-time evolution (RTE) and imaginary-time evolution (ITE) of a single Pauli term, used to Trotterize $e^{-i\cH t}$ and $e^{-\beta\cH}$, respectively, in our simulations. (a) RTE gadget for the single-qubit term $e^{-i\Delta t X}$, implemented via an $R_Z(\phi_{\rm RTE})$ rotation in the $X$ basis. (b) ITE gadget for $e^{-\Delta\beta X}$, implemented by applying an open-controlled $R_X(\phi_{\rm ITE})$ rotation to an ancilla, which is measured and post-selected onto $\ket{0}$. (c)-(d) Analogous RTE and ITE gadgets for the two-qubit term $ZZ$, where a CNOT ladder computes and uncomputes the parity of the two qubits before the same rotation is applied. Throughout, $\phi_{\rm RTE}=2\Delta t$ and $\phi_{\rm ITE}=2\arccos(e^{-2\Delta \beta})$. The post-selected ancilla measurement in (b) and (d) is the source of ITE's non-unitarity.}
\label{fig:rte-ite-gadgets}
\end{figure}

We benchmark these circuits in two complementary regimes. In the \textit{noisy regime}, we optimize the circuit with standard transpiler passes and then wrap every single-qubit and two-qubit gate with a small amount of depolarizing noise:
$$\mathcal{E}(\rho_S)=(1-\gamma)\rho_S +\frac{\gamma}{2^{|S|}} I$$
where $\gamma$ is the noise strength and $\rho_S$ is the reduced density matrix on the subset of qubits $S$ that are involved in the gate. Executing the resulting noisy circuits with a density-matrix simulator captures TITE as it would run on hardware, including gate noise and the finite sampling of $t\sim\cD$. In the \textit{noiseless regime} ($\gamma=0$), we instead forgo circuits entirely: we represent $e^{-\Delta\beta\cH}$ and $e^{-i\cH\Delta t}$ (or their Trotterized approximations) as sparse $2^n\times 2^n$ matrices and compute the output density matrix $\rho=\E_{\cD}[\ketbra{\psi(\beta,t)}]$ by numerically integrating with respect to the density function of $\cD$. This eliminates gate noise and finite-sample effects, allowing us to independently switch on and off each systematic error, namely the Trotterization of either operator and the choice of $\cD$. For clarity, we state the distribution $\cD$ used in each figure.

For reproducibility, we report the computational stack: circuits are constructed and transpiled with Qiskit, noisy executions use the Qiskit Aer density-matrix simulator, and noiseless linear algebra uses NumPy. Simulations were performed on the MIT Engaging cluster, where each data point at $n=10$ requires $\approx 1$--$3$ hours of wall-clock time on $4$ cores with $8\;{\rm GB}$ of memory.

\subsection{Results}\label{sec:sim-results}
\begin{figure*}[t]
    \centering

    \subfloat[Trace distance error and energy vs.\ $\beta$, with the latter compared to the exact ground state energy.\label{fig:sim-noisy-trd-E}]{%
        \includegraphics[width=0.32\textwidth]{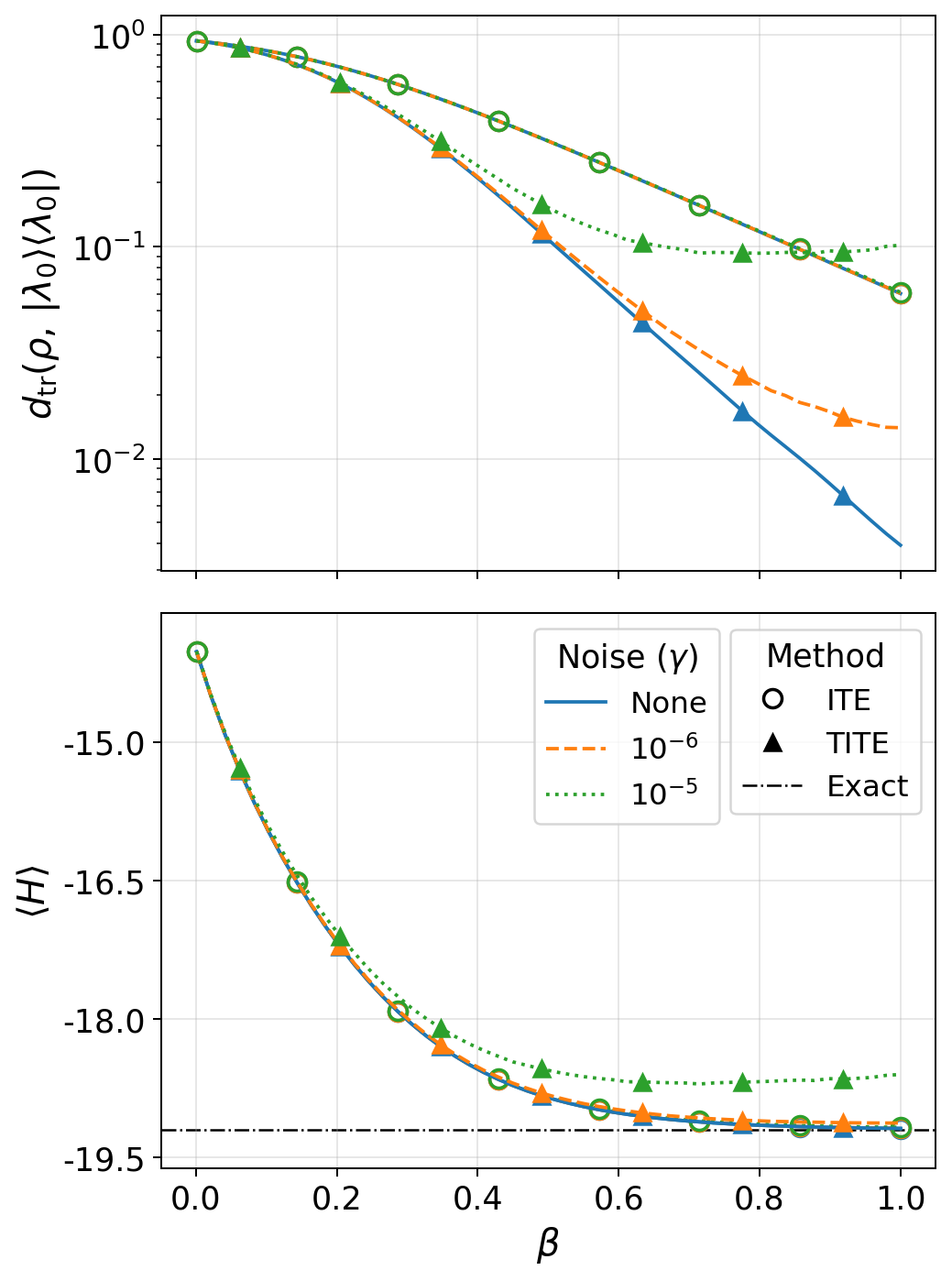}}
    \hfill
    \subfloat[Spin covariance $C_{56}^{xx}$ and $C_{56}^{zz}$ vs.\ $\beta$, compared to their exact values.\label{fig:sim-noisy-scov}]{%
        \includegraphics[width=0.32\textwidth]{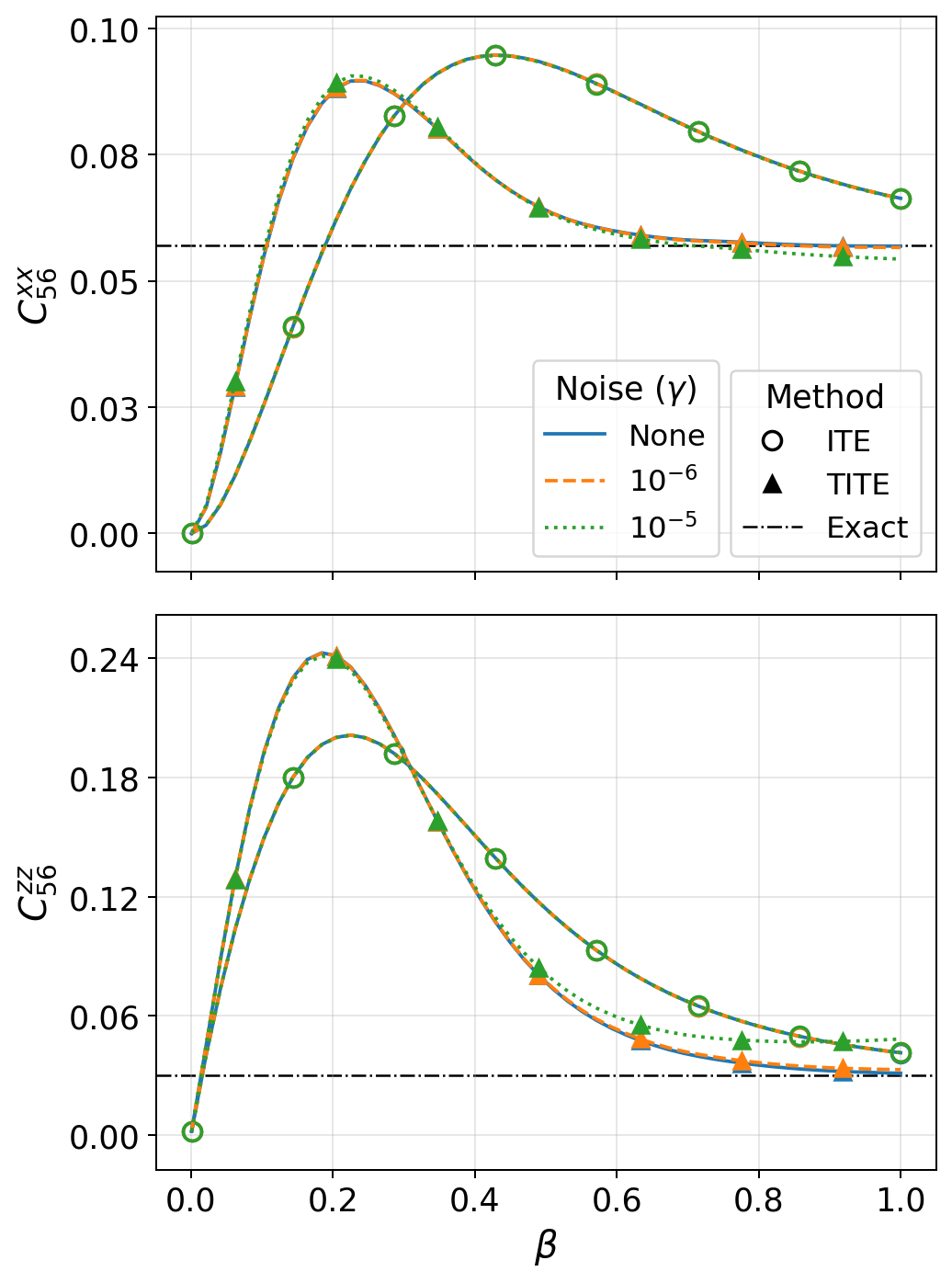}}
    \hfill
    \subfloat[Transverse magnetization, $m_x$, and longitudinal magnetization, $m_z$, vs.\ $\beta$, compared to their exact values.\label{fig:sim-noisy-mag}]{%
        \includegraphics[width=0.32\textwidth]{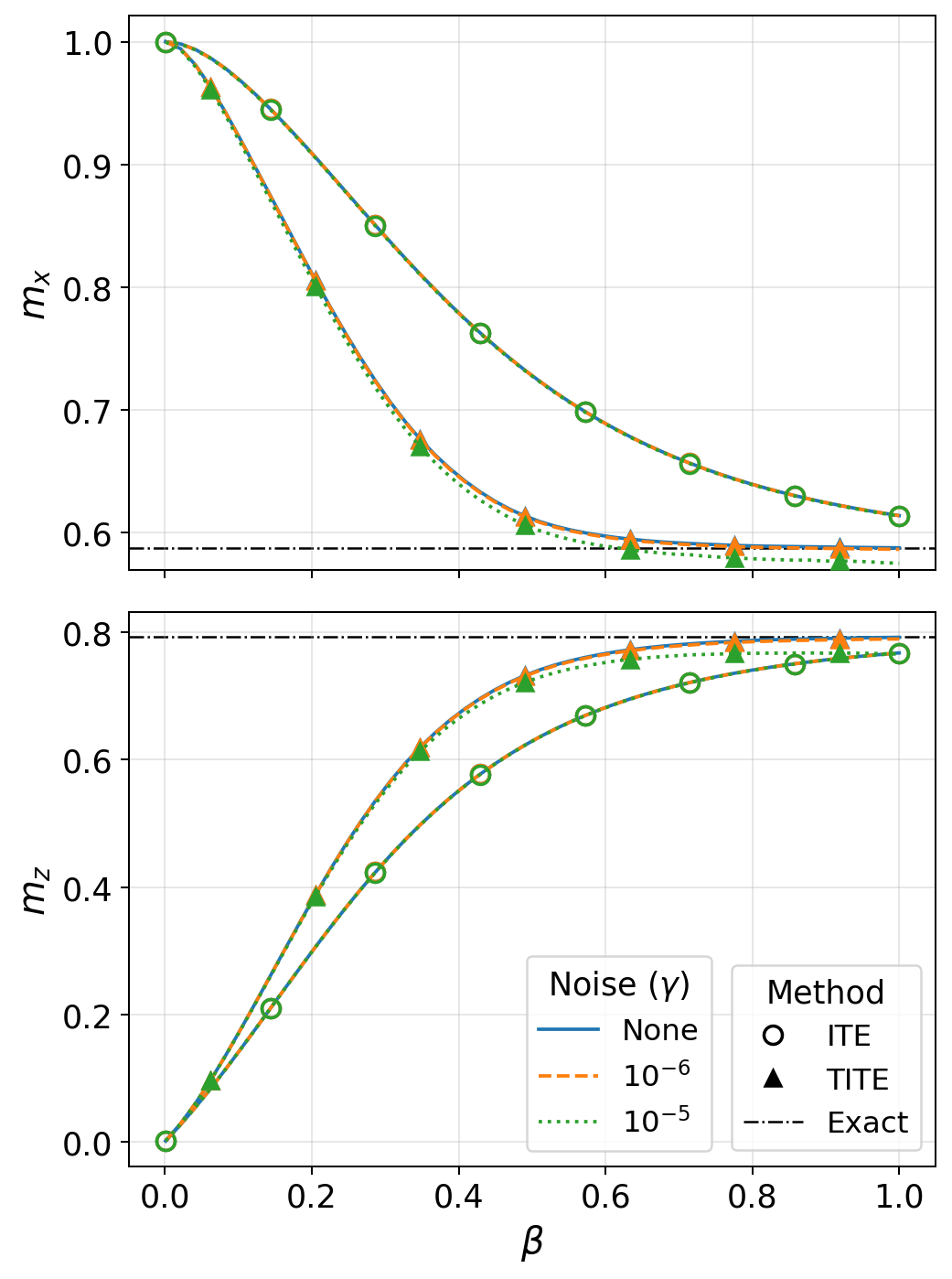}}

    \caption{Noisy circuit simulation ($n=10$), comparing ITE to TITE ($\cD=C_\beta$) across noise strengths $\gamma\in\{0,10^{-6},10^{-5}\}$. }
    \label{fig:sim-noisy}
\end{figure*}

From Figure~\ref{fig:sim-noisy-trd-E}, we see that TITE indeed quadratically suppresses the ground state preparation error, as evidenced by $\log(\trd(\rho,\ketbra{\lambda_0}))$ decreasing twice as fast as that of ITE as $\beta$ increases. Analogously, at any fixed target error $\epsilon$, we confirm that TITE requires roughly half the amount of $\beta$ that imaginary-time evolution does. Importantly, as the noise strength $\gamma$ increases from the noiseless setting to $10^{-6}$ and $10^{-5}$, the advantage of the twirled method worsens yet is still considerable, though the baseline ITE is more noise-robust. This is because the random real-time evolution increases the circuit gate count and depth, which increases the probability that errors propagate and perturb the state.

While standard imaginary-time evolution already achieves $\cO(\epsilon^2)$ error in the energy as shown in Figure~\ref{fig:sim-noisy-trd-E}, the advantage of TITE is that it quadratically suppresses errors in other observables. To illustrate this, Figure~\ref{fig:sim-noisy-scov} plots the spin covariance, $C_{ij}^{rs}\equiv \langle \sigma^r_i \sigma^s_j\rangle -\langle \sigma^r_i\rangle \langle \sigma^s_j\rangle $, between the fifth and sixth sites, which are located in the middle of the $n=10$ spin chain. We see that for both the transverse and longitudinal spin covariance, denoted $C_{56}^{xx}$ and $C_{56}^{zz}$ respectively, TITE converges both closer and sooner to the true observable value (depicted by a dotted black line) than imaginary-time evolution. Figure~\ref{fig:sim-noisy-mag} presents analogous evidence for the transverse and longitudinal magnetization, denoted $m_x$ and $m_z$ respectively, where $m_s=\frac 1 n\sum_{i=1}^{n}\langle \sigma^s_i\rangle $. In fact, it seems that both the spin covariance and magnetization observables are quite robust to gate noise as increasing $\gamma$ does not sizably worsen the estimate. Indeed, this contrast between the energy and other observables is the structure versus randomness dichotomy described in Sec.~\ref{sec:stabilizers}: the energy is a structured observable, diagonal in the energy eigenbasis, and therefore oblivious to the coherent part of the error, whereas the covariance and magnetization experience first-order errors which randomization renders incoherent. Further, Figures~\ref{fig:sim-noisy-scov} and \ref{fig:sim-noisy-mag} show that TITE gives a greater advantage over ITE for $X$ observables than $Z$ observables. This is because the latter have a smaller commutator with our choice of Hamiltonian, and are thus less affected by coherent error.

Having established quadratic error suppression in the noisy regime, where our simulation choices ensured that Trotterization error and the statistical error from a finite number of sampled times $t\sim\cD$ are not dominant, we now turn to the noiseless regime to isolate these two systematic effects.
Figure~\ref{fig:sim-trotter-v-ideal} isolates the effect of Trotterizing $e^{-\beta \cH}$ from the effect of Trotterizing $e^{-i\cH t}$ by using either or both of the operators in their exact matrix form. We see that when only imaginary-time evolution is Trotterized, beyond some critical inverse temperature $\beta_{\rm ITE}^c$, the Trotterization error dominates the $\exp(-\beta \Delta)$ ground state preparation error arising from imaginary-time evolution itself. Thus, we see that both imaginary-time evolution and TITE exhibit the same critical $\beta_{\rm ITE}^c$ beyond which the trace distance plateaus. On the other hand, when only real-time evolution is Trotterized, TITE exhibits quadratic error suppression up to some critical $\beta_{\rm RTE}^c$ after which the Trotterization error again dominates the suppressed ground state preparation error $\sim\exp(-2\beta \Delta)$, causing the trace distance to plateau. Just as one must traditionally ensure that imaginary-time evolution Trotterization error does not dominate, TITE requires that the real-time evolution Trotterization error not dominate either. 
\begin{figure}
    \centering
    \includegraphics[width=\linewidth]{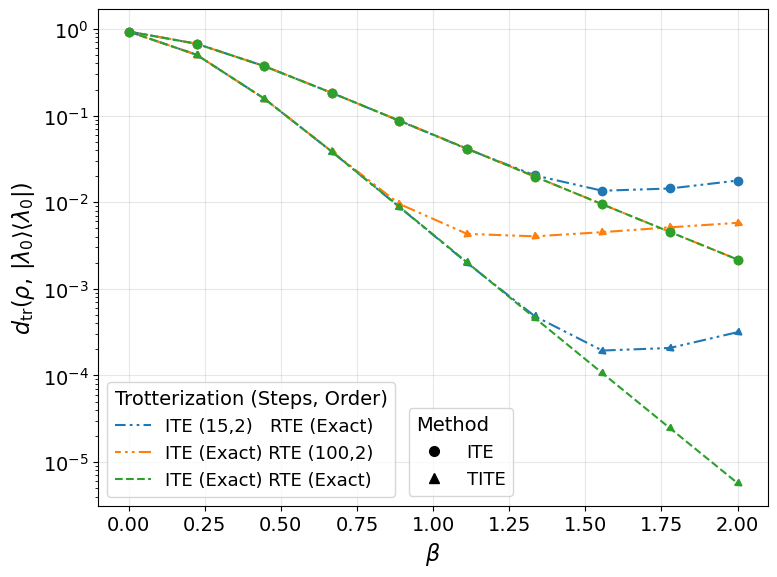}
    \caption{Noiseless comparison isolating Trotterization error in ITE and RTE ($n=10$, $\cD=\mathcal N_{\beta,\Delta}$). A plot of trace distance vs.\ $\beta$ when either $e^{-\beta\cH}$, $e^{-i\cH t}$, or neither is Trotterized shows that Trotter error begins to dominate preparation error beyond the critical points $\beta_{\rm ITE}^c$ and $\beta_{\rm RTE}^c$. As shown in the caption, the numbers in parentheses are (number of Trotter steps, order of Trotterization).}
    \label{fig:sim-trotter-v-ideal}
\end{figure}

\begin{figure}
    \centering
    \includegraphics[width=1\linewidth]{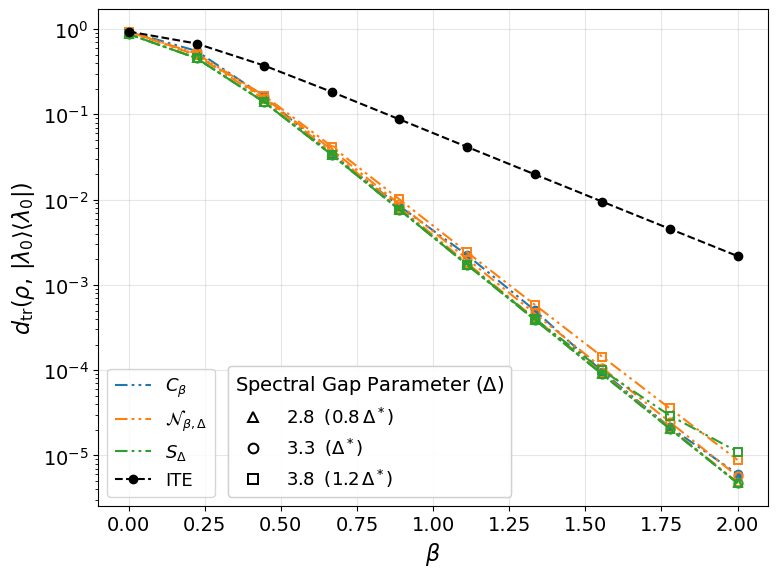}
    \caption{Noiseless comparison of the three twirling distributions $S_\Delta$, $C_\beta$, and $\mathcal N_{\beta,\Delta}$ ($n=10$). A plot of trace distance vs.\ $\beta$ shows that $S_\Delta$ achieves the best error suppression, where $\Delta$ is the spectral gap parameter and $\Delta^*$ is the true spectral gap.}
    \label{fig:sim-distrib}
\end{figure}

In Figure~\ref{fig:sim-distrib}, we compare the performance of all three distribution choices presented in Section~\ref{sec:distrib-choice}. The dominant feature is the robustness of the advantage: all three distributions double the decay rate of the trace distance, from the $e^{-\beta\Delta}$ of ITE to the $e^{-2\beta\Delta}$ guaranteed by Theorem~\ref{thm:ground-qite-error-suppression}, so that by $\beta=2$ the TITE preparation error lies nearly three orders of magnitude below that of ITE, an accuracy that pure imaginary-time evolution would attain only at $\beta \approx 4$, at quadratically greater sample cost. Among the distributions, $S_\Delta$ performs best while also yielding optimal cost, as it is constructed so that $|\varphi_\cD(\omega)|=0$ for $\omega \geq \Delta$, exactly cancelling the phase factors induced by randomized real-time evolution on the excited states. Yet the margin is small: even $C_\beta$, which requires no knowledge of the spectral gap whatsoever, and $\mathcal N_{\beta,\Delta}$ with the gap parameter $\Delta$ mis-estimated by $\pm 20\%$ relative to the true spectral gap, remain within a factor of $\approx 2$ of optimal performance. The quadratic advantage is thus a property of the randomization itself, not of a finely tuned distribution. 
Moreover, the TITE curves do not merely obey the $e^{-2\beta\Delta}$ bound of Theorem~\ref{thm:ground-qite-error-suppression}; they saturate the matching error lower bound presented in Appendix~\ref{x:rc-quadratic-optimality}, confirming that randomization removes the coherent content of the error entirely.

\begin{figure}
    \centering
    \includegraphics[width=\linewidth]{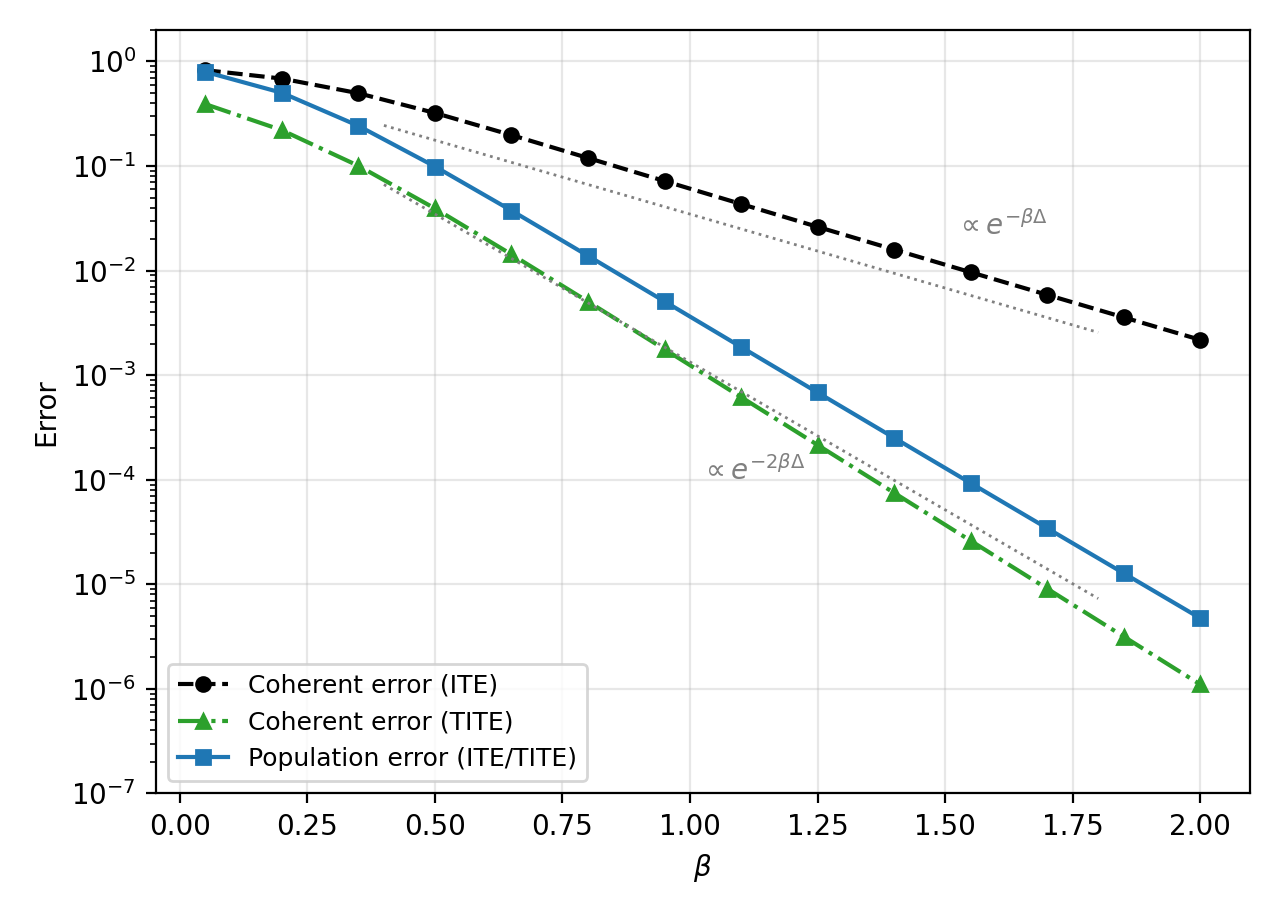}
    \caption{Noiseless simulation of the population error and coherent error ($n=10$, $\cD=S_\Delta$). The population error (blue) is shared by ITE and TITE and decays as $e^{-2\beta \Delta}$, as it is invariant under randomized real-time evolution. The coherent error of ITE (black) decays only as $e^{-\beta\Delta}$, whereas randomization provided by TITE suppresses the coherent error (green) down to the level of the population error.}
    \label{fig:sim-coherence}
\end{figure}

Figure~\ref{fig:sim-coherence} makes this structure versus randomness decomposition directly visible by separately depicting the diagonal \textit{population error} and the off-diagonal coherent error, whose sum upper bounds the ground state trace distance by the triangle inequality:
\begin{align*}
    &\trd(\rho,\ketbra{\lambda_0})\leq \underbrace{\frac 1 2 \sum_{k}|\rho_{kk}-\delta_{k0}|}_{\text{ population error}}+\underbrace{\frac 1 2 \|\rho-{\rm diag}(\rho)\|_1}_{\text{coherent error}},
\end{align*}
where $\rho_{kk}$ and ${\rm diag}(\rho)$ view $\rho$ as a matrix in the energy basis $\{\ket{\lambda_k}\}$. We see from Figure~\ref{fig:sim-coherence} that real-time evolution leaves the populations untouched, leading to the same population error term for ITE and TITE which decays at a rate $\propto e^{-2\beta\Delta}$; it is precisely this structured content of the error that is responsible for the error lower bound of Appendix~\ref{x:rc-quadratic-optimality}.
By contrast, the coherent error of ITE decays only as $e^{-\beta\Delta}$ and dominates its total error, while TITE suppresses the coherent error down to the level of the population error.

As a remark on simulation methodology, we find that computing $\rho$ via numerical integration is easiest for $\mathcal N_{\beta,\Delta}$, followed by $C_\beta$, and $S_\Delta$ by a large margin. This is because normal distributions are much better behaved than the artisanal $C_\beta$ and $S_\Delta $ distributions for which higher order moments do not exist, contributing to longer integration convergence times. In noisy circuit simulations, this manifests itself in finite sample effects from estimating $\rho$ as an average of density matrices, each corresponding to a simulated circuit with a drawn random time $t\sim \cD$. It is for this reason that our noisy simulation in Figure~\ref{fig:sim-noisy} uses $C_\beta$ and $\beta \in (0,1/2)$, such that the finite sample effects do not dominate the $\exp(-\beta \Delta)$ scale of the ITE error. We emphasize that these convergence considerations are artifacts of classical simulation, where the mixed state $\rho=\E_{\cD}[\ketbra{\psi(\beta,t)}]$ must be explicitly assembled by averaging over $t$. On quantum hardware, no such averaging step is performed: each shot simply draws a time $t\sim\cD$ and executes the corresponding circuit, and the resulting measurement statistics are automatically those of $\rho$. The choice of distribution therefore affects the hardware cost only through the expected real-time evolution cost $\E_{\cD}[|t|^\alpha]$, and not through any additional sampling overhead.

\section{Discussion and Conclusions}\label{sec:conclusion}

In this work, we developed twirled imaginary-time evolution as a general framework for enhancing imaginary-time evolution with randomized real-time dynamics. Concretely, we proved that appending real-time evolution to ITE for a random duration $t\sim\cD$ (for an appropriately chosen distribution $\cD$) quadratically suppresses the trace distance to the ground state, from $\cO(\epsilon)$ to $\cO(\epsilon^2)$. Equivalently, the imaginary time required to reach a fixed accuracy is halved, $\beta \to \beta/2$, so the $e^{\cO(\beta)}$ sample complexity/classical cost of an ITE implementation is quadratically reduced.
Our numerical demonstrations on a non-integrable spin chain confirm this picture in detail: the error decay rate doubles, the advantage persists under gate noise and Trotterization, and the algorithm is robust to the choice of distribution, requiring neither fine tuning nor precise knowledge of the spectral gap.

Several conditions limit the scope of these results. The method presupposes a gapped, non-degenerate Hamiltonian and an initial state with non-negligible overlap with the ground state, with performance that degrades as either assumption weakens. For instance, a degenerate ground state subspace would require randomization conditioned on the gap of the target subspace. Moreover, our analysis addressed algorithmic error only, and while our simulations show the advantage surviving moderate depolarizing noise, the deeper circuits incurred by real-time evolution can amplify error propagation. A complete noise analysis, including whether temporal twirling inherits a Pauli-twirling-like noise resilience~\cite{Wallman_2016, vandenBerg_2022}, remains for future work. 

The central idea of using randomized real-time evolution to suppress errors applies well beyond ITE, in both quantum and classical settings. Randomized real-time evolution could enhance ground state estimation in variational quantum algorithms without increasing the number of variational parameters, and may similarly benefit other ground state preparation algorithms~\cite{Lin2020Near, Dong_2022, chen2023quantum, Zhan_2026}. On the classical side, ongoing work is exploring the utility of randomized time evolution in variational Monte Carlo, where it enables more accurate estimates of ground state observables~\cite{Martyn_Luo_2025}. The generalization to stabilizer states discussed in Sec.~\ref{sec:stabilizers} further extends our method to the broader task of projecting into a subspace stabilized by an operator, with potential applications to virtual distillation~\cite{Huggins_2021}, state preparation protocols, and algorithmic cooling~\cite{Laflemme_2022}.

Another particularly intriguing direction is to make the algorithm fully unitary by purifying its randomness. Instead of classically sampling $t\sim\cD$, one may prepare a clock register~\cite{kitaev2002classical} in the coherent state $\ket*{\sqrt{\cD}} \equiv \int dt\, \sqrt{f_{\cD}(t)}\, \ket{t}$ and apply the controlled evolution $\int dt\, \ketbra{t}\otimes e^{-i\cH t}$, where $f_{\cD}$ is the probability density function of $\cD$. Tracing out the clock register reproduces our channel exactly, but retaining it makes explicit that the entropy generated by the algorithm (upon tracing out the clock register) equals the entanglement entropy between the system and clock. Hence, our randomization does not destroy the coherent error, but rather transfers it into system-clock correlations, and the Landauer cost of resetting the clock register quantifies the thermodynamic price of the randomness. 
Better understanding this equivalence may help clarify exactly how much of ground state preparation can be delegated to randomness and at what thermodynamic cost.

Taken together, the results presented here add to the growing body of evidence that randomization, whether sampled classically or purified into a quantum register, is a powerful resource for enhancing quantum information processing.

\begin{acknowledgements}
This research was supported by PNNL's Quantum Algorithms and Architecture for Domain Science (QuAADS) Laboratory Directed Research and Development (LDRD) Initiative. This material is based upon work supported by the U.S. Department of Energy, Office of Science, National Quantum Information Science Research Centers, Quantum Science Center (QSC). The Pacific Northwest National Laboratory is operated by Battelle for the U.S. Department of Energy under Contract DE-AC05-76RL01830.
\end{acknowledgements}

\section*{Data Availability}
The data and source code that support the findings of this article are
openly available~\cite{arulandu_tite_github}.

\appendix
\section{Optimality of Quadratic Suppression}\label{x:rc-quadratic-optimality}
In this section, we show that any randomized strategy similar to twirled imaginary-time evolution cannot suppress ground state preparation errors beyond $\cO(\epsilon^2)$. 
\begin{lemma}\label{lemma:quadratic-optimality-rite}
    Consider any ground state approximation of the form: 
    $$\ket*{\psi}=\sqrt{1-\delta^2}\ket{\lambda_0}+\delta \ket{\lambda_\perp}$$
    where $\braket{\lambda_0}{\lambda_\perp}=0$. Further, let $\{U_t\}_{t\in T}$ be a collection of unitaries indexed by $t\in T$ for a set $T$ that stabilize the ground state up to phase:
    $$U_t\ket{\lambda_0}=e^{i\theta_t}\ket{\lambda_0}.$$
    Let $\cD$ be a distribution over $T$, and consider the mixed state corresponding to the stochastic process of drawing $t\sim \cD$ and applying $U_t$ to $\ket{\psi}$:
    $$\rho\equiv\E_{ \cD}[U_t\ketbra{\psi}U_t^\dagger]$$
    Then, $\trd(\rho,\ketbra{\lambda_0})\geq \delta^2$. Applied to the imaginary-time-evolved state $\ket{\psi(\beta)}$ from Theorem~\ref{thm:ground-qite-error-suppression}, this shows that any twirling procedure using unitaries that stabilize the ground state up to phase cannot achieve a super-quadratic error suppression. 
\end{lemma}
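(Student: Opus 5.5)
The plan is to lower bound the trace distance by testing against a single observable, the projector $P_0 \equiv \ketbra{\lambda_0}$. For any two states $\rho,\sigma$ and any operator $0\le M\le I$, we have $\trd(\rho,\sigma)\ge |\Tr[M(\rho-\sigma)]|$, since the trace distance is the maximum of $\Tr[M(\rho-\sigma)]$ over such $M$. Taking $\sigma=\ketbra{\lambda_0}$ and $M=P_0$, it then suffices to show
\[
\Tr[P_0\rho]\le 1-\delta^2 .
\]
This gives $\trd(\rho,\ketbra{\lambda_0})\ge 1-\Tr[P_0\rho]\ge\delta^2$. This matches the ``population error'' picture of Figure~\ref{fig:sim-coherence}: the ground-state population is invariant under any twirl that stabilizes $\ket{\lambda_0}$.

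The key step is to compute $\Tr[P_0 U_t\ketbra{\psi}U_t^\dagger] = |\bra{\lambda_0}U_t\ket{\psi}|^2$ for each fixed $t$. From $U_t\ket{\lambda_0}=e^{i\theta_t}\ket{\lambda_0}$ and unitarity we get
\[
U_t^\dagger\ket{\lambda_0}=e^{-i\theta_t}\ket{\lambda_0}, \qquad \bra{\lambda_0}U_t = e^{i\theta_t}\bra{\lambda_0}.
\]
Therefore
\[
\bra{\lambda_0}U_t\ket{\psi}=e^{i\theta_t}\braket{\lambda_0}{\psi}=e^{i\theta_t}\sqrt{1-\delta^2},
\]
where we used $\braket{\lambda_0}{\lambda_\perp}=0$. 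So each term contributes exactly $1-\delta^2$. By linearity of the trace and of the expectation over $\cD$, $\Tr[P_0\rho]=1-\delta^2$ exactly, and the bound follows. (If $\cD$ is continuous, linearity of the expectation requires only that the integrand be bounded and measurable, which holds here because it is constant.)

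The one point needing care, rather than a genuine obstacle, is the conjugate-stabilizer identity $U_t^\dagger\ket{\lambda_0}=e^{-i\theta_t}\ket{\lambda_0}$. It follows by applying $U_t^\dagger$ to both sides of the stabilizer relation.

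For the final remark, apply this to $\ket{\psi(\beta)}$ from Eq.~\eqref{eq:ite-state} with $U_t=e^{-i\cH t}$, which stabilizes $\ket{\lambda_0}$ up to the phase $e^{-i\lambda_0 t}$. This gives $\trd\ge\delta^2$, where $\delta=\Theta(e^{-\beta(\lambda_1-\lambda_0)})$ whenever $c_1\neq0$. Hence no stabilizing twirl can beat $\cO(\epsilon^2)$, i.e.\ the suppression is at best quadratic.
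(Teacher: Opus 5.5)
Your proposal is correct and is essentially the paper's argument. The paper measures $\{\ketbra{\lambda_0},\, I-\ketbra{\lambda_0}\}$ and lower-bounds the trace distance by the total variation distance $\delta^2$ between the resulting Bernoulli outcome distributions, which is exactly your test-operator bound with $M=P_0$ and the invariant ground-state population $\Tr[P_0\rho]=1-\delta^2$; your explicit derivation of $\bra{\lambda_0}U_t=e^{i\theta_t}\bra{\lambda_0}$ from unitarity spells out a step the paper uses implicitly.
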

\begin{proof}
    Consider the projective measurement onto the ground state: $\{\ketbra{\lambda_0},I-\ketbra{\lambda_0}\}$. Then, 
    \begin{align*}
        \tr(\ketbra{\lambda_0}\rho)&=\E_{\cD}[\bra{\lambda_0}U_t\ketbra{\psi}U_t^\dagger \ket{\lambda_0}]\\
        &=\E_{\cD}[e^{i\theta_t}\bra{\lambda_0}\cdot \ketbra{\psi}\cdot e^{-i\theta_t}\ket{\lambda_0}]\\
        &=\E_{\cD}[\lvert\braket{\lambda_0}{\psi}\rvert^2]\\
        &=1-\delta^2 
    \end{align*}
    where we recall $\ket{\psi}=\sqrt{1-\delta^2}\ket{\lambda_0}+\delta\ket{\lambda_\perp}$. Then, applying this projective measurement to $\rho$ and $\ketbra{\lambda_0}$ gives classical outcome distributions ${\rm Bern}(\delta^2)$ and ${\rm Bern}(0)$ respectively, where Bern denotes a Bernoulli distribution. Then, 
    \begin{align*}
       \trd(\rho,\ketbra{\lambda_0})\geq d_{\rm TV}({\rm Bern}(\delta^2),{\rm Bern}(0))=\delta^2,
    \end{align*}
    because the trace distance between two states upper bounds the total variation distance between the outcome distributions of any POVM applied to the two states. 
\end{proof}

\begin{figure}[t]
\centering
\usetikzlibrary{calc}
\begin{tikzpicture}[scale=0.75, every node/.style={font=\small}]


\draw[gray!60] (0,0) circle (3);
\draw[gray!40]  (-3,0) arc (180:360:3 and 0.8);   
\draw[gray!40, dashed] (3,0) arc (0:180:3 and 0.8); 

\draw[gray!70, dashed] (0,-3) -- (0,3.45);

\draw[gray!80]          (-2.007,2.23) arc (180:360:2.007 and 0.5); 
\draw[gray!80, dashed]  (2.007,2.23)  arc (0:180:2.007 and 0.5);   

\foreach \u in {320, 25, 155} {
  \filldraw[gray!70!black] ($(0,2.23)+(\u:2.007 and 0.5)$) circle (1.6pt);
}

\draw[->, gray!70!black] ($(0,2.23)+(275:2.007 and 0.5)$)
    arc (275:335:2.007 and 0.5);
\node[gray!70!black, anchor=north] at ($(0,2.23)+(300:1.65 and 0.62)$)
    {$e^{i\theta_t}$};

\filldraw[orange!85!black] (-1.151,1.82) circle (2pt);
\node[anchor=east, text=orange!85!black] at (-0.28,1.36) {$\ket{\psi(\beta)}$};

\draw[<->, orange!85!black, thin] (-1.10,1.93) -- (-0.06,2.95);
\node[orange!85!black, anchor=south east] at (-0.42,2.42) {$\delta$};

\filldraw[red] (0,2.23) circle (1.8pt);
\node[anchor=north west, text=red] at (0.08,2.40) {$\rho$};

\draw[<->, red, thick] (0.14,2.28) -- (0.14,2.95);
\node[red, anchor=west] at (0.22,2.64) {$\delta^2$};

\filldraw[blue] (0,3) circle (1.8pt);
\node[anchor=south, text=blue] at (0,3.42) {$\ket{\lambda_0}$};

\node[gray!60!black, anchor=west] at (2.0,3.15) {$\{U_t\ket{\psi(\beta)}\}_{t}$};

\node[red!80!black, font=\footnotesize] at (0,-3.45)
    {$\rho=\E_{\cD}\big[U_t \ketbra{\psi(\beta)}{\psi(\beta)} U_t^\dagger\big]$};

\end{tikzpicture}
\caption{Optimality of quadratic error suppression (Lemma~\ref{lemma:quadratic-optimality-rite}). In the effective two-level space spanned by the ground state $\ket{\lambda_0}$ (north pole) and the excited component $\ket{\lambda_\perp}$, any set of unitaries $U_t$ that stabilize the ground state up to phase preserves the condition $\bra{\lambda_0}\rho\ket{\lambda_0} = 1-\delta^2$, so the states $\{U_t \ket{\psi(\beta)}\}_t$ form an orbit about $|\lambda_0\rangle$ at fixed latitude (polar angle exaggerated for visibility). Mixing over these states erases the coherent error $\delta$ (orange chord) and produces a mixed state $\rho$ that lies toward the center of the sphere. However, the state can never move further upward as the condition $\bra{\lambda_0}\rho\ket{\lambda_0} = 1-\delta^2$ remains true. This sets the floor $\trd(\rho, \ketbra{\lambda_0}{\lambda_0}) \geq \delta^2$.
}
\label{fig:optimality}
\end{figure}

Figure~\ref{fig:optimality} geometrically illustrates this limitation for the imaginary-time-evolved state $\ket{\psi(\beta)}$, showing that while mixing maps the states $\{U_t\ket{\psi(\beta)}\}$ closer to the true ground state, it can never suppress the error beyond $\delta^2$. Lemma~\ref{lemma:quadratic-optimality-rite} thus implies that the only way, if any, to achieve super-quadratic error suppression is with operations that do not stabilize the ground state. This means that we must reach beyond real-time evolution to operations that act on the excited states themselves. One candidate approach is to modify the purification of TITE discussed in Sec.~\ref{sec:conclusion}; namely, instead of discarding the clock after the controlled evolution $\int dt\ketbra{t}\otimes e^{-i\cH t}$, we can measure the clock in a Fourier basis and post-select the outcome, effectively yielding a spectral filter. This notably does not stabilize the ground state up to phase, circumventing the lower bound in Lemma~\ref{lemma:quadratic-optimality-rite}. Of course, post-selection on the clock incurs sample complexity cost, but if the resulting error suppression is strongly super-quadratic, this may be a more efficient use of samples for ground state preparation than an equivalently longer amount of imaginary-time evolution followed by temporal twirling.

\section{Distributions}\label{x:distrib}
Referring to Section~\ref{sec:distrib-choice}, this appendix establishes the properties of the three randomization distributions presented. When a lower bound $\Delta$ on the spectral gap is known, Lemma~\ref{lemma:sinc-distrib} constructs $S_\Delta$, whose characteristic function vanishes identically beyond $\Delta$ at cost $\E[|t|^\alpha]=\Theta(\Delta^{-\alpha})$, and Lemma~\ref{lemma:distrib-opt} proves this cost optimal. When $\Delta$ is unknown, Lemma~\ref{lemma:gcauchy} constructs $C_\beta$, whose characteristic function decays as $e^{-\beta|\omega|}$ at cost $\Theta(\beta^\alpha)$, and Corollary~\ref{cor:gcauchy-opt} proves this cost is optimal in turn. Finally, Lemma~\ref{lemma:normal-distrib} shows that an ordinary Gaussian $\mathcal{N}_{\beta,\Delta}$ also satisfies Eq.~\eqref{eq:D-phi-decay}, at the suboptimal cost $\Theta((\beta/\Delta)^{\alpha/2})$. The three characteristic functions are compared schematically in Figure~\ref{fig:distrib-schematic}, their empirical performance in Figure~\ref{fig:sim-distrib}, and their exact densities and characteristic functions are plotted at representative parameters in Figure~\ref{fig:distrib-exact}, where each of the results below appears as a visible feature.

Throughout, we adopt the convention $\varphi_{\cD}(\omega) \equiv \E_{t\sim\cD}[e^{-i\omega t}] = \int f_{\cD}(t)\, e^{-i\omega t}\, dt = \hat f_{\cD}(\omega)$ for the characteristic function of a distribution $\cD$ with density $f_{\cD}$, and write ${\mathbb I}(\cdot)$ for the indicator function.

\begin{figure*}[t]
    \centering
    \includegraphics[width=0.92\textwidth]{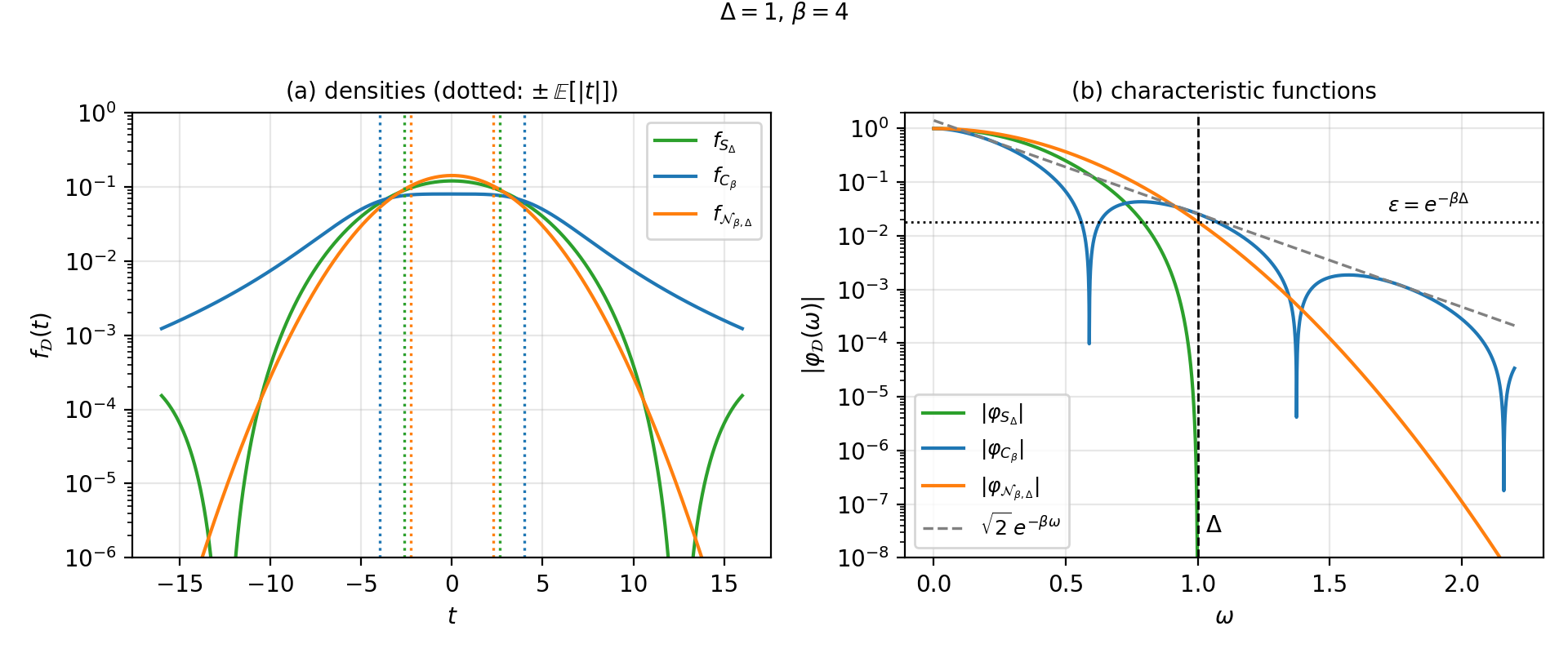}
    \caption{The three twirling distributions of this appendix, plotted exactly with $\Delta = 1$ and $\beta = 4$, for which $\epsilon = e^{-\beta\Delta} \approx 0.018$. (a) The densities $f_{\cD}(t)$, with dotted lines marking the exact mean absolute times: $\E[|t|] = 12\ln 2/(\pi\Delta) \approx 2.65$ for $S_\Delta$ (Lemma~\ref{lemma:sinc-distrib}), $ 4$ for $C_\beta$ (Lemma~\ref{lemma:gcauchy}), and $4/\sqrt{\pi} \approx 2.26$ for $\mathcal N_{\beta,\Delta}$ (Lemma~\ref{lemma:normal-distrib}). This plot evidences the fact that the spread of $S_\Delta$ is set by $1/\Delta$ alone, independent of $\beta$. (b) The characteristic functions $|\varphi_{\cD}(\omega)|$, each exhibiting its respective lemma: $\varphi_{S_\Delta}$ vanishes identically at $\omega = \Delta$ (Lemma~\ref{lemma:sinc-distrib}); $\varphi_{C_\beta}$ oscillates beneath its exponential envelope (dashed gray) with cusps at the cosine zeros (Lemma~\ref{lemma:gcauchy}); and the Gaussian crosses the decay threshold at exactly $\omega = \Delta$ (Lemma~\ref{lemma:normal-distrib}). As required by Eq.~\eqref{eq:D-phi-decay}, all three fall to $\cO(\epsilon)$ (dotted guide) by $\omega = \Delta$ (dashed guide), up to the constant $\sqrt 2$ in the case of $C_\beta$.}
    \label{fig:distrib-exact}
\end{figure*}
\begin{lemma}\label{lemma:sinc-distrib}
Let $S_\Delta$ be the distribution defined by the density $f_{S_\Delta}(t)\equiv \frac{3\Delta}{8\pi}\,{\rm sinc}^4(\Delta t/4)$. Then, $\varphi_{S_\Delta}(\omega)=0$ for $|\omega|\geq \Delta$, and its absolute moments are
\begin{align*}
    \E_{S_\Delta}[|t|^\alpha]&=\frac{12(2^\alpha-2)\Gamma(\alpha-3)\sin(\alpha\pi/2)}{\pi \Delta^\alpha }\\
    &\propto\begin{cases}
     \Delta^{-\alpha} & \alpha \in [1,3)\\
    \infty& \alpha \in [3,\infty)
\end{cases}
\end{align*}
where the formula is to be interpreted as a limit at $\alpha \in \{1,2\}$ (where the pole of $\Gamma(\alpha-3)$ cancels the zero of its cofactor), giving the values used in the main text:
\begin{align*}
    \E_{S_\Delta}[|t|]=\frac{12\ln 2}{\pi \Delta}, \qquad \E_{S_\Delta}[t^2]=\frac{12}{\Delta^2}.
\end{align*}
\end{lemma}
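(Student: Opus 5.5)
The plan has three parts: normalization, support of the characteristic function, and the moments. All three rest on recognizing ${\rm sinc}^4$ as the Fourier transform of a fourfold convolution of box functions.

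\textbf{Normalization and support.} Recall that $\mathbb I(|\omega|\le \Delta/4)$ has inverse Fourier transform proportional to ${\rm sinc}(\Delta t/4)$. By the convolution theorem, ${\rm sinc}^4(\Delta t/4)$ is therefore the Fourier transform of $g=\mathbb I_{[-\Delta/4,\Delta/4]}^{*4}$, up to a constant. This $g$ is a continuous piecewise-cubic B-spline supported on $[-\Delta,\Delta]$. Hence $\varphi_{S_\Delta}=\hat f_{S_\Delta}\propto g$ vanishes for $|\omega|\ge\Delta$; strictly it vanishes for $|\omega|>\Delta$, and at $\omega=\pm\Delta$ by continuity of $g$. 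For the normalization I will use $\int_{\mathbb R}{\rm sinc}^4(x)\,dx=2\pi/3$, computed from the identity below with $\alpha=0$, or from Plancherel applied to the triangle function. With the substitution $x=\Delta t/4$ this gives $\int f_{S_\Delta}=\frac{3\Delta}{8\pi}\cdot\frac4\Delta\cdot\frac{2\pi}{3}=1$, which equivalently says $\varphi_{S_\Delta}(0)=1$.

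\textbf{Moments.} By symmetry and the same substitution,
\begin{equation*}
\E_{S_\Delta}[|t|^\alpha]=\frac{3\Delta}{4\pi}\Big(\frac4\Delta\Big)^{\alpha+1}\int_0^\infty x^{\alpha-4}\sin^4x\,dx .
\end{equation*}
For $\alpha\ge3$ the integrand behaves like $x^{\alpha-4}\sin^4 x$ at infinity, whose average is $\tfrac38 x^{\alpha-4}$, so the integral diverges. For $\alpha\in(-1,3)$ the integral converges: near $0$ the integrand is $\sim x^{\alpha}$, and at infinity it is dominated by $x^{\alpha-4}$. On that range I will write $\sin^4x=\frac18(3-4\cos2x+\cos4x)$ and use the Mellin formula
\begin{equation*}
\int_0^\infty x^{s-1}\cos(ax)\,dx=\Gamma(s)\cos(\pi s/2)\,a^{-s}.
\end{equation*}
The obstacle is that with $s=\alpha-3\in(-4,0)$ the three pieces diverge individually. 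I would handle this by analytic continuation in $s$. The combination $3-4\cos2x+\cos4x$ vanishes to order $x^4$ at the origin, and its coefficients satisfy $3-4+1=0$ and $-4\cdot2^2+4^2=0$, so the subtracted Taylor terms cancel. The regularized Mellin identity, obtained by subtracting Taylor polynomials of cosine, then applies termwise.

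The constant term contributes $0$ in the regularized sense. The remaining terms give
\begin{equation*}
\int_0^\infty x^{\alpha-4}\sin^4 x\,dx=\tfrac18\Gamma(\alpha-3)\cos\!\big(\tfrac{\pi(\alpha-3)}2\big)\big(4^{3-\alpha}-4\cdot2^{3-\alpha}\big).
\end{equation*}
I will double-check this against the known value $\int_0^\infty\sin^4 x/x^2\,dx=\pi/4$ at $\alpha=2$. Next I use $\cos(\pi(\alpha-3)/2)=-\sin(\alpha\pi/2)$ and factor $4^{3-\alpha}-4\cdot 2^{3-\alpha}=2^{6-2\alpha}(1-2^{\alpha-1})$. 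Multiplying by the prefactor $\frac{3}{4\pi}4^{\alpha+1}\Delta^{-\alpha}$, the powers of $2$ combine: $\frac{3}{32\pi}\cdot 2^{2\alpha+2}\cdot 2^{6-2\alpha}=\frac{24}{\pi}$. Together with $-(1-2^{\alpha-1})\cdot 2=2^{\alpha}-2$, this gives $\frac{12(2^\alpha-2)\Gamma(\alpha-3)\sin(\alpha\pi/2)}{\pi\Delta^\alpha}$.

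\textbf{Integer cases.} At $\alpha=1$ and $\alpha=2$, $\Gamma(\alpha-3)$ has a simple pole. At $\alpha=1$ the cofactor $2^\alpha-2$ has a simple zero, and at $\alpha=2$ the factor $\sin(\alpha\pi/2)$ does. Since the integral is continuous in $\alpha$ on $(-1,3)$, the true values are the limits, which I evaluate by L'H\^opital using $\mathrm{Res}_{s=-n}\Gamma=(-1)^n/n!$.

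At $\alpha=1$: $\Gamma(\alpha-3)\approx\frac{1}{2(\alpha-1)}$ and $2^\alpha-2\approx2\ln2(\alpha-1)$, so the value is $12\cdot\ln2\cdot 1/(\pi\Delta)=\frac{12\ln 2}{\pi\Delta}$.

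At $\alpha=2$: $\Gamma(\alpha-3)\approx-\frac{1}{\alpha-2}$ and $\sin(\alpha\pi/2)\approx-\frac\pi2(\alpha-2)$, which gives $12\cdot2\cdot\frac{\pi}{2}/(\pi\Delta^2)=12/\Delta^2$.

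The main care point is justifying the termwise analytic continuation. If needed, I will fall back on repeated integration by parts to reduce to $0<s<1$ before applying the Mellin formula.
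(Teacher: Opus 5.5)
Your proposal is correct and follows essentially the same route as the paper. The support claim comes from the fourfold convolution of box functions (a cubic B-spline), and the moments come from power-reducing $\sin^4$ into cosines, applying the Mellin/Gamma integral for cosine, simplifying with reflection and trigonometric identities, and taking limits at $\alpha\in\{1,2\}$ via the residues of $\Gamma$. The only difference is cosmetic: the paper groups $\sin^4 u=\tfrac12(1-\cos 2u)-\tfrac18(1-\cos 4u)$, cites a tabulated $\sin^2$ integral valid for $\alpha\in(1,3)$, and uses a separate tabulated identity for normalization, whereas your Taylor-subtraction regularization covers all of $\alpha\in(-1,3)$ and recovers the normalization as the $\alpha\to 0$ limit.
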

\begin{proof}
    We first prove the property, ${\rm supp}(\varphi_{S_\Delta})\subset [-\Delta,\Delta]$, which explains the choice of ${\rm sinc}^4(\cdot)$. Define
    $$g(t)\equiv \frac{\Delta}{4\pi}{\rm sinc}(\Delta t/4),$$
    whose Fourier transform is the rectangle $\hat g (\omega)={\mathbb I}(\omega \in [-\Delta/4,\Delta/4])$. Since $f_{S_\Delta} \propto g^4$, we have
    \begin{align*}
        \varphi_{{S_\Delta}}(\omega)&\propto \hat g (\omega)*\hat g(\omega)*\hat g(\omega)*\hat g(\omega),
    \end{align*}
    where $*$ denotes convolution: the four-fold convolution of rectangles of width $\Delta/2$, which is supported in $[-\Delta,\Delta]$. Carrying out the convolution explicitly, with a suitable normalization so that $\varphi_{S_\Delta}(0)=1$, yields the cubic B-spline:
    \begin{align*}
        \varphi_{S_\Delta}(\omega) = \begin{cases}
            1-\tfrac 3 2 u^2 + \tfrac 3 4 u^3 & u\leq 1\\
            \tfrac 1 4 (2-u)^3 & 1\leq u \leq 2\\
            0 & u \geq 2
        \end{cases}, \quad u \equiv \frac{2|\omega|}{\Delta}.
    \end{align*}
    This closed form is convenient for numerical noiseless simulations of TITE.

    For normalization and the moments, we recall two standard integral identities, which appear as Equations 3.827(7) and 3.823(1) respectively in Ref.~\citep{gradshteyn2014table}:
    \begin{align*}
        \int_0^\infty \frac{\sin^4(at)}{t^4}dt&=\frac{a^3 \pi}{3}\\
        \int_0^\infty t^{{\kappa}-1}\sin^2 (at)dt&=-\frac{\Gamma({\kappa})\cos(\pi {\kappa}/2)}{2^{{\kappa}+1}a^{\kappa}}
    \end{align*}
    where $a>0$ and ${\kappa}\in (-2,0)$.
    Applying the first for $a=\Delta/4$ with the appropriate constant factors, we confirm that $f_{{S_\Delta}}$ is indeed a valid probability density. As for the moments, using $u=\Delta t/4$,
    \begin{align*}
    \E_{S_\Delta }[|t|^\alpha]&=\frac{3\cdot 2^{2\alpha-1}}{\pi \Delta^\alpha}\int_{-\infty}^{\infty}|u|^{\alpha}{\rm sinc}^4 u\; du \\
    &=\frac{3\cdot 2^{2\alpha}}{\pi \Delta^\alpha }\int_0^{\infty }\frac{\sin^4 u}{u^{4-\alpha}}du
    \end{align*}
    Since $2\sin^2(u/2)=1-\cos(u)$, from the second identity for $\kappa=\alpha-3$, we have that:
    \begin{align*}
        \int_0^{\infty}\frac{1-\cos(au)}{u^{4-\alpha}}\;du&=\frac{\pi a^{3-\alpha}}{2\Gamma(4-\alpha)\sin(\pi(3-\alpha)/2)}
    \end{align*}
    Then, applying this following the fact that $\sin^4 u =\frac 1 2(1-\cos(2u))-\frac 1 8 (1-\cos(4u))$, we have that:
    \begin{align*}
        &\E_{S_\Delta }[|t|^\alpha]\\
        &=\frac{3\cdot 2^{2\alpha}}{\pi \Delta^\alpha }\frac{\pi }{2\Gamma(4-\alpha)\sin(\pi(3-\alpha)/2)}\left(\frac{2^{3-\alpha}}{2}-\frac {4^{3-\alpha}}{8}\right)\\
        &=\frac{6(2^\alpha-2)}{\Delta^\alpha\Gamma(4-\alpha)\sin(\pi(3-\alpha)/2)}\\
        &=\frac{6(2^\alpha-2)\Gamma(\alpha-3)}{\pi \Delta^\alpha}\cdot \underbrace{\frac{\sin(\pi(\alpha-3))}{\sin(\pi(3-\alpha)/2)}}_{2\sin(\pi \alpha/2)}
    \end{align*}
    where the final line applies Euler's reflection formula for the Gamma function followed by standard trigonometric identities. Taking $\alpha \to 1$ and $\alpha \to 2$ (where $(2^\alpha - 2)\Gamma(\alpha-3)\sin(\alpha\pi/2) \to \ln 2$ and $\pi$, respectively) produces the stated values of $\E_{S_\Delta}[|t|]$ and $\E_{S_\Delta}[t^2]$; the moment diverges at $\alpha = 3$ (logarithmically) and beyond, as the density has tails $\sim |t|^{-4}$.
\end{proof}
Thus, $\cD=S_{\Delta}$ is indeed a distribution that satisfies the desired characteristic tail condition in Eq.~\eqref{eq:D-phi-decay} with absolute moments $\E_{\cD}[|t|^\alpha]=\Theta(\Delta^{-\alpha})$, which are independent of the target error $\epsilon$. We next show that the corresponding real-time implementation overhead is minimal up to constant factors by providing a lower bound for these absolute moments.

\begin{lemma}\label{lemma:distrib-opt}
Any distribution $\cD$ satisfying Eq.~\eqref{eq:D-phi-decay} has moment scaling $\E_{\cD}[|t|^\alpha]= \Omega(\Delta^{-\alpha})$ for $\alpha \in [1,2]$.
\end{lemma}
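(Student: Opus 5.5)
The plan is to obtain a lower bound on the first absolute moment directly from the characteristic-function condition at the single frequency $\omega=\Delta$, and then lift it to all $\alpha\in[1,2]$ by Jensen's inequality. Only the value of $\varphi_{\cD}$ at $\omega=\Delta$ is needed; the decay for larger $\omega$ plays no role.

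\textbf{Step 1 (one frequency suffices).} Eq.~\eqref{eq:D-phi-decay} gives $|\varphi_{\cD}(\Delta)|\leq c\epsilon$ for some constant $c$. Since $\epsilon$ is a small constant, I will assume $c\epsilon\leq 1/2$. Because $t$ is real,
\[
\mathrm{Re}\,\varphi_{\cD}(\Delta)=\E_{t\sim\cD}[\cos(\Delta t)] .
\]
Hence
\[
\E_{t\sim\cD}\big[1-\cos(\Delta t)\big]=1-\mathrm{Re}\,\varphi_{\cD}(\Delta)\geq 1-|\varphi_{\cD}(\Delta)|\geq \tfrac12 .
\]
In words, a distribution whose characteristic function is small at $\Delta$ must place substantial mass at times where $\Delta t$ is not close to a multiple of $2\pi$. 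In particular, $t$ cannot be concentrated near $0$.

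\textbf{Step 2 (pointwise inequality).} I would use the elementary bound $1-\cos x\leq \min(2,x^2/2)\leq |x|$, valid for all real $x$. For $|x|\leq 2$ it follows from $x^2/2\leq |x|$; for $|x|\geq 2$ it follows from $1-\cos x\leq 2\leq |x|$. Applying it with $x=\Delta t$ and taking expectations gives
\[
\tfrac12\leq \E[1-\cos(\Delta t)]\leq \Delta\,\E_{\cD}[|t|] ,
\]
so $\E_{\cD}[|t|]\geq 1/(2\Delta)$. More precisely, the bound is $\E_{\cD}[|t|]\geq (1-\cO(\epsilon))/\Delta$.

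\textbf{Step 3 (general $\alpha$).} For $\alpha\in[1,2]$ the map $x\mapsto x^\alpha$ is convex on $[0,\infty)$. Jensen's (Lyapunov's) inequality therefore gives
\[
\E_{\cD}[|t|^\alpha]\geq \big(\E_{\cD}[|t|]\big)^\alpha\geq (2\Delta)^{-\alpha}=\Omega(\Delta^{-\alpha}) .
\]
This also covers the case where the moment is infinite, for which the bound holds trivially.

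\textbf{Expected obstacle.} There is essentially no analytic difficulty. The one point needing care is interpreting ``$\cO(\epsilon)$'' in Eq.~\eqref{eq:D-phi-decay} so that the implied constant times $\epsilon$ is bounded away from $1$; without that, the statement fails, since for example a point mass at $t=0$ has $\varphi_{\cD}\equiv1$. I would state explicitly that $\epsilon$ is taken small enough that $|\varphi_{\cD}(\Delta)|\leq 1/2$. Comparing with Lemma~\ref{lemma:sinc-distrib}, where $\E_{S_\Delta}[|t|]=12\ln2/(\pi\Delta)$, confirms that $S_\Delta$ is optimal up to a constant factor.
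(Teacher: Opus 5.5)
Your proposal is correct and follows essentially the paper's argument. Both evaluate the decay condition only at $\omega=\Delta$, bound $1-c\epsilon$ by $\Delta\,\E_{\cD}[|t|]$ using an elementary pointwise inequality, and lift the result to $\alpha\in[1,2]$ with Jensen. The only differences are cosmetic: you route through $1-\cos x\le |x|$ where the paper uses $|1-e^{-ix}|\le |x|$, and you skip the paper's separate second-moment bound, which Jensen makes redundant anyway.
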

\begin{proof}
    Applying Eq.~\eqref{eq:D-phi-decay} for $\omega=\Delta$, 
    \begin{align*}
    1-c\epsilon &\leq 1-|\varphi_{\cD}(\Delta)|\leq |1-\varphi_{\cD}(\Delta)|\leq \E_{\cD}[|1-e^{-i\Delta t}|]\\
    &\leq \Delta \E_{\cD}[|t|]
    \end{align*}
    where we use the fact that $|1-e^{-ix}|=2|\sin(x/2)|\leq |x|$, with $c$ being the constant factor in \eqref{eq:D-phi-decay}. Similarly, $\Re[\varphi_{\cD}(\omega)]\leq |\varphi_{\cD}(\omega)|\leq c\epsilon$, meaning
    \begin{align*}
       1-c\epsilon &\leq 1 -\Re[\varphi_{\cD}(\Delta)] =\E_{\cD}[1-\cos(\Delta t)]\\
       &\leq \frac{\Delta^2}{2}\E_{\cD}[t^2]
    \end{align*}
    where we use the fact that $1-\cos x \leq x^2/2$. Re-arranging, we find that:
    \begin{align*}
        \E_{\cD}[|t| ]&\geq  \frac{1-c\epsilon}{\Delta}\\
        \E_{\cD}[t^2 ]&\geq  \frac{2(1-c\epsilon)}{\Delta^2}
    \end{align*}
    Thus, for $\epsilon \ll 1$, $\E_{\cD}[|t|^\alpha]=\Omega(\Delta^{-\alpha})$ for $\alpha\in\{1,2\}$. The claim then follows by Jensen's inequality.
\end{proof}

Lemma~\ref{lemma:distrib-opt} admits an uncertainty-principle interpretation: to render all Fourier content at frequencies above $\Delta$ uniform, a distribution must spread over times of order $1/\Delta$, no matter how small the target error. The optimal absolute moments of $\cD$ are therefore determined by the Hamiltonian itself, rather than by the target accuracy.

As we argued in Section~\ref{sec:distrib-choice}, when $\Delta$ is unknown, we desire a distribution whose characteristic function satisfies the decay condition in Eq.~\eqref{eq:no-delta-decay}, meaning that any chosen $\beta $ benefits from quadratic error suppression. The following construction is one such distribution that satisfies this property. 

\begin{lemma}\label{lemma:gcauchy}
Let $C_\beta$ be the distribution defined by the density $f_{C_\beta}(t)=\frac 1 {\beta\pi} \frac1  {1+(t/\beta)^4/4}$. Then, $|\varphi_{C_\beta}(\omega)|\leq \sqrt{2}  e^{-\beta|\omega|}$ for any $\omega\in \R $, with $\E_{C_\beta}[|t|^\alpha]=\alpha\beta^\alpha$ for $\alpha\in \{1,2\}$ and $\E_{C_\beta}[|t|^\alpha]=\cO(\beta^\alpha)$ for $\alpha \in [1,2]$.
\end{lemma}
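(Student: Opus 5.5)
The plan is to compute the characteristic function of $C_\beta$ in closed form by contour integration. The moment claims then follow from either direct integration or Taylor expansion of $\varphi$. First I reduce to $\beta=1$. If $t=\beta x$ with $x\sim C_1$ (density $f_{C_1}(x)=\frac{1}{\pi}\frac{1}{1+x^4/4}$), then $\varphi_{C_\beta}(\omega)=\varphi_{C_1}(\beta\omega)$ and $\E_{C_\beta}[|t|^\alpha]=\beta^\alpha\E_{C_1}[|x|^\alpha]$. So it suffices to show $|\varphi_{C_1}(\omega)|\leq\sqrt2 e^{-|\omega|}$, $\E_{C_1}[|x|]=1$ and $\E_{C_1}[x^2]=2$.

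The key algebraic step is the Sophie Germain factorization
\begin{equation*}
1+\tfrac{x^4}{4}=\left(\tfrac{x^2}{2}+1\right)^2-x^2=\left(\tfrac{x^2}{2}+x+1\right)\left(\tfrac{x^2}{2}-x+1\right).
\end{equation*}
This exhibits the four simple poles at $x=\pm1\pm i$. For $\omega>0$ I close the contour of $\int e^{-i\omega x}f_{C_1}(x)\,dx$ in the lower half-plane, where $e^{-i\omega x}$ decays, picking up the poles $\pm1-i$. For $\omega<0$ I close in the upper half-plane, or equivalently use that $f_{C_1}$ is even, so $\varphi_{C_1}$ is real and even. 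Summing the two residues should give
\begin{equation*}
\varphi_{C_1}(\omega)=e^{-|\omega|}\big(\cos\omega+\sin|\omega|\big).
\end{equation*}
As a sanity check, $\varphi_{C_1}(0)=1$, which also confirms normalization. The bound then follows from $|\cos\omega+\sin|\omega||=\sqrt2\,|\sin(|\omega|+\pi/4)|\leq\sqrt2$. I expect the main obstacle to be the residue bookkeeping, namely getting the constants and the conjugate-pair combination right so that the prefactor is exactly $1$. To guard against errors I would cross-check the final formula against $\varphi(0)=1$ and against the second-derivative computation below.

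For the moments, $\E_{C_1}[|x|]=\frac{2}{\pi}\int_0^\infty\frac{x\,dx}{1+x^4/4}$. Substituting $s=x^2/2$ (so $x\,dx=ds$) turns this into $\frac{2}{\pi}\int_0^\infty\frac{ds}{1+s^2}=1$. For the second moment I use $\E[x^2]=-\varphi''(0)$, which is legitimate because the density decays like $x^{-4}$, so the second moment is finite and $\varphi$ is $C^2$. For $\omega>0$ small, expanding $e^{-\omega}(\cos\omega+\sin\omega)=(1-\omega+\tfrac{\omega^2}{2})(1+\omega-\tfrac{\omega^2}{2})+\cO(\omega^3)$ gives $1-\omega^2+\cO(\omega^3)$. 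The linear term cancels, which is consistent with $\varphi$ being even and $C^2$ at the origin, and so $\E[x^2]=2$. Rescaling gives $\E_{C_\beta}[|t|^\alpha]=\alpha\beta^\alpha$ for $\alpha\in\{1,2\}$.

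Finally, for $\alpha\in[1,2]$, Jensen's (Lyapunov's) inequality gives
\begin{equation*}
\E_{C_\beta}[|t|^\alpha]\leq\big(\E_{C_\beta}[t^2]\big)^{\alpha/2}=2^{\alpha/2}\beta^\alpha=\cO(\beta^\alpha).
\end{equation*}
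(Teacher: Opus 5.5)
Your proof is correct, and its skeleton matches the paper's: rescale to $\beta=1$, obtain $\varphi_{C_1}(\omega)=e^{-|\omega|}(\cos|\omega|+\sin|\omega|)=\sqrt2\, e^{-|\omega|}\cos(|\omega|-\pi/4)$, read off the bound, compute the $\alpha\in\{1,2\}$ moments, and finish with Jensen. The difference is in how the integrals are evaluated.

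The paper quotes two Gradshteyn--Ryzhik entries. The first is the cosine transform of $(b^4+x^4)^{-1}$, which gives $\varphi$. The second is $\int_0^\infty x^{p-1}/(1+x^q)\,dx=\frac{\pi}{q}\csc(p\pi/q)$, which gives both normalization and the moments. That second entry yields the exact value $\E_{C_\beta}[|t|^\alpha]=2^{(\alpha-1)/2}\csc((\alpha+1)\pi/4)\,\beta^\alpha$ for every $\alpha\in(-1,3)$. So on $[1,2]$ the $\cO(\beta^\alpha)$ claim is actually an exact formula there, and Jensen is not really needed.

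You instead derive everything yourself:
\begin{itemize}
\item The Sophie Germain factorization locates the poles at $\pm1\pm i$, and the residue sum gives $\varphi$. I checked the constants: with the clockwise orientation, the two lower-half-plane residues combine to exactly $e^{-\omega}(\cos\omega+\sin\omega)$ for $\omega>0$, so your hedged ``should give'' is right.
\item The first moment reduces to an arctangent integral via $s=x^2/2$.
\item The second moment drops out of $\varphi(\omega)=1-\omega^2+\cO(|\omega|^3)$ without a second integral.
\end{itemize}

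Your route is self-contained and makes clear why the decay rate is $e^{-\beta|\omega|}$: it is the distance $\beta$ of the poles $\beta(\pm1\pm i)$ from the real axis. The paper's route is shorter and gives the full moment curve in $\alpha$, but relies on table lookups.
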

\begin{proof}
Re-scaling by $\beta$ without loss of generality, it suffices to show that $|\varphi_{C_1}(\omega)|\leq \sqrt 2 e^{-|\omega|}$ and $\E_{C_1}[|t|^\alpha]=\alpha$ for $\alpha\in \{1,2\}$, with the final moment claim following from Jensen's inequality.

We again begin by recalling the following standard integral identities which appear as Equations 3.241(2) and 3.727(1) in \citep{gradshteyn2014table}:
\begin{align*}
    &\int_0^\infty \frac{x^{p-1}}{1+x^q}dx =\frac \pi {q}\csc\left(\frac{p\pi}{q}\right)\quad (p,q>0)\\
    &\int_0^\infty \frac{\cos(ax)}{b^4+x^4}dx\\
    &~~~~~~~=\frac{\pi \sqrt 2 }{4b^3}\exp(-\frac{ab}{\sqrt 2})\left(\cos\left(\frac{ab}{\sqrt 2 }\right)+\sin\left(\frac{ab}{\sqrt 2}\right)\right)
\end{align*}
Applying the first for $p=1$ and $q=4$, we verify that $f_{C_1}$ is a valid probability density. Applying the second for $a=|\omega|\neq 0$ and $b=\sqrt 2 $, 
\begin{align*}
    \varphi_{C_1}(\omega)&=\frac{4}{\pi}\int_{-\infty}^{\infty}\frac{\cos(\omega t )+i\cancel{\sin(\omega t)}}{(\sqrt 2 )^4+t^4 }dt\\
    &=e^{-|\omega| }(\cos |\omega| +\sin |\omega| )\\
    &=\sqrt 2 e^{-|\omega| }\cos(|\omega|-\pi/4)
\end{align*}
where we use the fact that $\sin(\cdot),\cos(\cdot)$ are odd and even respectively. Since the $\omega=0$ case is trivial, $|\varphi_{C_1}(\omega)|\leq \sqrt 2 e^{-|\omega|}$ for any $\omega\in \R$, as desired. Finally, for the moments, by the first identity for $p=1+\alpha$ and $q=4$,
\begin{align*}
    \E_{C_1 }[|t|^\alpha ]&=\frac{2(\sqrt 2)^{\alpha+1}}{\pi}\int_0^{\infty}\frac{x^\alpha}{1+x^4}dx\\
    &=2^{(\alpha-1)/2}\csc((\alpha+1)\pi/4)\\
    &=\alpha
\end{align*}
for $\alpha\in \{1,2\}$, as desired. 
\end{proof}
We next show that the $\Theta(\beta^\alpha)$ cost of $C_\beta$ is also optimal, by an argument analogous to Lemma~\ref{lemma:distrib-opt}.

\begin{corollary}\label{cor:gcauchy-opt}
Any distribution $\cD$ satisfying Eq.~\eqref{eq:no-delta-decay}, i.e., $|\varphi_{\cD}(\omega)|\leq c e^{-\beta\omega}$ for all $\omega > 0$ and some constant $c>0$, has moment scaling $\E_{\cD}[|t|^\alpha]=\Omega(\beta^\alpha)$ for $\alpha \in [1,2]$.
\end{corollary}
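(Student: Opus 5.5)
The argument runs parallel to Lemma~\ref{lemma:distrib-opt}, except that we are free to choose the frequency at which Eq.~\eqref{eq:no-delta-decay} is evaluated. Take $\omega_0 = \kappa/\beta$ with a constant $\kappa>0$ to be fixed later. Eq.~\eqref{eq:no-delta-decay} then gives $|\varphi_{\cD}(\omega_0)| \leq c e^{-\kappa}$. Choose $\kappa = \ln(2c)+1$ (any $\kappa > \ln c$ works), so that $1 - c e^{-\kappa} \geq 1/2$ is a positive constant independent of $\beta$.

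\textbf{First moment.} Using $|1-e^{-ix}| \leq |x|$ exactly as in Lemma~\ref{lemma:distrib-opt}, we have
\begin{align*}
1 - c e^{-\kappa} \leq 1-|\varphi_{\cD}(\omega_0)| \leq |1-\varphi_{\cD}(\omega_0)| \leq \E_{\cD}[|1-e^{-i\omega_0 t}|] \leq \frac{\kappa}{\beta}\E_{\cD}[|t|].
\end{align*}
Rearranging gives $\E_{\cD}[|t|] \geq \beta(1-ce^{-\kappa})/\kappa = \Omega(\beta)$.

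\textbf{Second moment.} The same approach applies to the real part: $\Re\varphi_{\cD}(\omega_0) \leq c e^{-\kappa}$, together with $1-\cos x \leq x^2/2$, yields
\[
\E_{\cD}[t^2] \geq \frac{2\beta^2(1-ce^{-\kappa})}{\kappa^2} = \Omega(\beta^2).
\]

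\textbf{Intermediate $\alpha \in [1,2]$.} Since $x\mapsto x^\alpha$ is convex on $[0,\infty)$, Jensen's inequality gives $\E_{\cD}[|t|^\alpha] \geq (\E_{\cD}[|t|])^\alpha = \Omega(\beta^\alpha)$. This single bound covers all $\alpha\in[1,2]$, so the second-moment computation serves only as a consistency check.

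The only delicate point is the choice of $\kappa$. Unlike Lemma~\ref{lemma:distrib-opt}, the constant $c$ is not multiplied by a small $\epsilon$, so $\omega_0$ must be taken large enough in units of $1/\beta$ that $ce^{-\kappa}<1$. The resulting constants then depend on $c$, which is fixed, and this is consistent with the $\Omega(\cdot)$ claim. Finally, comparing with Lemma~\ref{lemma:gcauchy} shows that $C_\beta$ attains this bound up to constants.
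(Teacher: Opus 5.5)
Your proof is correct and matches the paper's: it evaluates Eq.~\eqref{eq:no-delta-decay} at $\omega=\ln(2c)/\beta$ so that $|\varphi_{\cD}|\leq 1/2$, reruns the two estimates of Lemma~\ref{lemma:distrib-opt}, and finishes with Jensen's inequality, and your $\kappa=\ln(2c)+1$ changes only the constants. Your $\kappa$ is automatically positive, since continuity of $\varphi_{\cD}$ together with $\varphi_{\cD}(0)=1$ forces $c\geq 1$.
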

\begin{proof}
Let $a\equiv \ln(2c)$ and apply Eq.~\eqref{eq:no-delta-decay} at $\omega = a/\beta$, which gives $|\varphi_{\cD}(a/\beta)|\leq c e^{-a} = \frac 1 2$. Repeating the two estimates in the proof of Lemma~\ref{lemma:distrib-opt} with $\Delta$ replaced by $a/\beta$ and $c\epsilon$ replaced by $\frac 1 2$ yields
\begin{align*}
    \E_{\cD}[|t|]\geq \frac{\beta}{2a}, \qquad \E_{\cD}[t^2]\geq \frac{\beta^2}{a^2},
\end{align*}
and the claim for $\alpha\in[1,2]$ follows by Jensen's inequality.
\end{proof}

While $S_\Delta$ and $C_\beta $ are optimal in their own respects, they are both purpose-built. We conclude by showing that the desired decay constraints can be achieved by more common distributions such as Gaussians, albeit sub-optimally in moment scaling.

\begin{lemma}\label{lemma:normal-distrib}
    Let $\mathcal{N}_{\beta,\Delta}\equiv \mathcal{N}(0,2\beta/\Delta)$. Then, $\varphi_{\mathcal N_{\beta,\Delta}}(\omega)\leq e^{-\beta\omega}$ for $\omega \geq \Delta$ and $\E_{\mathcal N_{\beta,\Delta} }[|t|^\alpha]=\Theta((\beta/\Delta)^{\alpha/2})$ for $\alpha \in [1,2]$.
\end{lemma}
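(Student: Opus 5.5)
The plan is a direct computation with the Gaussian characteristic function, followed by the standard formula for Gaussian absolute moments. Write $\sigma^2 \equiv 2\beta/\Delta$ for the variance of $\mathcal N_{\beta,\Delta}$. The characteristic function of $\mathcal N(0,\sigma^2)$ is real and equals
\begin{equation*}
\varphi_{\mathcal N_{\beta,\Delta}}(\omega)=e^{-\sigma^2\omega^2/2}=e^{-\beta\omega^2/\Delta}.
\end{equation*}
One can obtain this by completing the square in $\int f(t)e^{-i\omega t}\,dt$, or simply cite it. The decay claim then reduces to the scalar inequality $\beta\omega^2/\Delta\geq \beta\omega$, i.e.\ $\omega/\Delta\geq 1$, which holds exactly on the region $\omega\geq\Delta$ (using $\beta,\omega>0$). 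Hence $\varphi_{\mathcal N_{\beta,\Delta}}(\omega)\leq e^{-\beta\omega}$ there, with equality at $\omega=\Delta$. This matches the crossing visible in Figure~\ref{fig:distrib-exact}.

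It is also worth remarking why this suffices for Theorem~\ref{thm:ground-qite-error-suppression}. Since $e^{-\beta\omega}\leq e^{-\beta\Delta}=\cO(\epsilon)$ for $\omega\geq\Delta$ (up to the constant $c_{\rm min}$ in the choice of $\beta$), Eq.~\eqref{eq:D-phi-decay} holds.

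For the moments, use the closed form
\begin{equation*}
\E_{\mathcal N(0,\sigma^2)}[|t|^\alpha]=\sigma^\alpha\,\frac{2^{\alpha/2}\Gamma\!\left(\frac{\alpha+1}{2}\right)}{\sqrt\pi}.
\end{equation*}
This follows from the substitution $u=t^2/(2\sigma^2)$ in $2\int_0^\infty t^\alpha f(t)\,dt$, which turns the integral into a Gamma integral. The prefactor is a positive finite constant, continuous in $\alpha$ on $[1,2]$, so it is bounded above and below by absolute constants there. Therefore $\E[|t|^\alpha]=\Theta(\sigma^\alpha)=\Theta((2\beta/\Delta)^{\alpha/2})=\Theta((\beta/\Delta)^{\alpha/2})$. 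As sanity checks, $\alpha=1$ gives $\sigma\sqrt{2/\pi}=2\sqrt{\beta/(\pi\Delta)}$, which equals the value $4/\sqrt\pi$ quoted in Figure~\ref{fig:distrib-exact} for $\beta=4$, $\Delta=1$; and $\alpha=2$ gives $\sigma^2$.

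No step is a genuine obstacle. The only points needing care are the sign and normalization conventions for $\varphi$ (the paper uses $e^{-i\omega t}$, which is harmless here because the Gaussian is symmetric) and stating the $\Theta$ uniformly over $\alpha\in[1,2]$, which the continuity of the Gamma prefactor handles. One could add that this scaling is suboptimal: under $\beta=\Theta(\log(1/\epsilon)/\Delta)$ it is a factor $\log^{\alpha/2}(1/\epsilon)$ worse than $S_\Delta$ in view of Lemma~\ref{lemma:distrib-opt}.
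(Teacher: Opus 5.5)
Your proposal is correct and matches the paper's proof. Both compute the Gaussian characteristic function $e^{-\sigma^2\omega^2/2}$, reduce the decay claim to $\beta\omega^2/\Delta \geq \beta\omega$ on $\omega\geq\Delta$, and read off the $\Theta((\beta/\Delta)^{\alpha/2})$ moment scaling from the Gaussian absolute-moment formula. The paper first rescales time by $\beta$ (setting $A=\beta\Delta$), which is only a cosmetic difference.
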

\begin{proof}
Re-scaling time by $\beta$ (under which the condition $\omega \geq \Delta$ becomes $\omega \geq \beta\Delta$) and letting $A\equiv \beta \Delta$, it suffices to show that the distribution $\mathcal N_A\equiv \mathcal N (0,2/A)$ gives $\varphi_{\mathcal N_A}(\omega)\leq e^{-\omega}$ for $\omega\geq A $ with moments $\E_{\mathcal N_A}[|t|^\alpha]=\Theta(A^{-\alpha/2})$ for $\alpha \in [1,2]$.

    We begin by recalling the known characteristic function and central absolute moments of an arbitrary Gaussian:
    \begin{align*}
        \varphi_{\mathcal N (\mu,\sigma^2)}(\omega)&=e^{i\omega\mu -\sigma^2 \omega^2/2}\\
        \E_{\mathcal{N}(\mu,\sigma^2)}[|t-\mu |^\alpha]&=\frac{\sigma^\alpha 2^{\alpha/2}\Gamma((\alpha+1)/2)}{\sqrt \pi }
    \end{align*}
    where the moment formula holds for any real $\alpha > -1$.
    From the latter, $\E_{\mathcal N_A}[|t|^\alpha]=\Theta(A^{-\alpha/2})$ for $\alpha \in [1,2]$ immediately follows by taking $\sigma=\sqrt{2/A}$.
    From the former, for $\omega\geq A$,
    \begin{align*}
        \varphi_{\mathcal N _A} (\omega)&=e^{-\omega^2 / A }\leq e^{-\omega }
    \end{align*}
    as desired.
\end{proof}

Compared to $S_\Delta$, we see that $\mathcal N_{\beta,\Delta}$ trades a factor of $\Delta^{-\alpha/2}$ for an additional factor of $\beta^{\alpha/2}$ in the real-time evolution overhead. Since $\beta$ scales with $\epsilon$, we confirm that $S_\Delta$ gives asymptotically better cost than $\mathcal N _{\beta,\Delta}$, though the latter is a more well-known distribution.

\section{The Mixing Lemma as a Structure versus Randomness Statement}\label{x:svr-mixing}
Section~\ref{sec:stabilizers} developed an analogy between TITE and the structure versus randomness dichotomy of additive combinatorics~\cite{tao2007structure, gowers2001}. In this appendix, we observe that the correspondence begins one level earlier at the Campbell-Hastings mixing lemma itself, which we view as an operator-algebraic instance of what Ref.~\cite{tao2007structure} calls a \textit{generalized von Neumann theorem}. We note that its two hypotheses constitute a structure/randomness decomposition of the error ensemble, and its contrapositive is an inverse theorem. This reading unifies the abstract tool of Sec.~\ref{sec:background-rc} with the interpretation of our algorithm in Sec.~\ref{sec:stabilizers}.

To illuminate this correspondence, let our Hilbert space of interest be that of operator-valued ensembles with inner product $\langle \mathcal A,\mathcal B\rangle=\E_j {\rm tr}(A_j^\dagger B_j)$, and let our basic structured set be deterministic rank-one operators, i.e. an ensemble, $\mathcal A$, which places its entire probability mass on a single operator of the form $A=\ketbra{x}{y}$ where $\|\ket{x}\|_2=\|\ket{y}\|_2=1$. Now, consider the unitary ensemble $\mathcal U=\{U_j\}$. Then, 
\begin{equation*}
      \underbrace{\mathcal U}_{f}=\underbrace{V}_{f_{\rm str}}+\underbrace{\mathcal W}_{f_{\rm psd}}
\end{equation*}
where $\mathcal W =\{W_j\}$ with $W_j=U_j-V$ is the error ensemble. The conditions of the mixing lemma (Lemma \ref{lemma:mixing}) bound both the individual magnitudes of the errors, $\|W_j\|_{\rm op}=\|U_j - V\|_{\rm op} \leq a$, as well as the mean of the errors, $\|\E_j[W_j]\|_{\rm op}=\|\sum_j p_j U_j - V\|_{\rm op} \leq b$. Then, for any basic structured ensemble $\mathcal A $,
$$|\langle \mathcal A, \mathcal W\rangle|=|\bra{x} \E_j[W_j]\ket{y}|\leq \|\E_j[W_j]\|_{\rm op}\leq  b$$
where the final inequality is precisely the mean condition of the mixing lemma. This is exactly the statement that $\mathcal W $ is $b$-pseudorandom \citep{tao2007structure}. When $b\ll a$, the error ensemble is sufficiently pseudorandom with respect to its size; it implies that while the errors $W_j$ may be of the larger size $a$, they point in different directions such that their average correlates with any fixed operator by at most $b$. 

The conclusion of the lemma is then the generalized von Neumann step of the dichotomy. The channel $\rho \mapsto U\rho U^\dagger$ is a \emph{quadratic} functional of $U$. Expanding and grouping the terms by moment order, 
\begin{equation*}
    \Lambda(\rho) - \mathcal V(\rho) = \E_j[W_j]\,\rho V^\dagger + V\rho\, \E_j[W_j]^\dagger + \E_j[W_j \rho W_j^\dagger] , 
\end{equation*}
meaning the linear terms contribute at most $2b$ by the $b$-pseudorandomness of $\mathcal W$, while the quadratic term must be naively controlled by the size of each error, contributing $a^2$. This is the central mechanism of the paradigm, namely that the pseudorandomness of the error ensemble is what suppresses the first moment contributions to the diamond norm error of the channel. This is morally equivalent to the generalized von Neumann results described in Ref.~\citep{tao2007structure} which argue that pseudorandom objects have poor multilinear correlations. In this light, the contrapositive of the mixing lemma is an inverse theorem; if the mixed channel misses its target by much more than $a^2$, then necessarily $b \gg a^2$, meaning the error ensemble is not very pseudorandom and instead is strongly correlated with a fixed operator direction.


Finally, we note an instructive inversion of purpose. Additive combinatorics decomposes a given object into structure plus randomness in order to prove theorems about arbitrary objects. The mixing lemma runs this in reverse; given a structured target $V$ that no single accessible unitary can reach, it seeks to construct an object $\mathcal{U}$ whose errors are quadratically pseudorandom with respect to their individual size.

\bibliography{refs}

\end{document}